\newtheorem{theorem}{Theorem}
\newtheorem{proposition}{Proposition}
\numberwithin{equation}{section}
\title{Data Assimilation With An Integral-Form \\ Ensemble Square-Root Filter}
\author[1]{Robin Armstrong\footnote{Email: \texttt{rja243@cornell.edu} (corresponding author)}}
\affil[1]{Cornell University, Center for Applied Mathematics, Ithaca, NY 14850, USA}
\author[2]{Ian Grooms\footnote{Email: \texttt{ian.grooms@colorado.edu}}}
\affil[2]{University of Colorado Boulder, Dept.\ of Applied Mathematics, Boulder, CO 80309, USA}
\date{}
\begin{document}
    \maketitle

    \begin{abstract}
    Geoscientific applications of ensemble Kalman filters face several computational challenges arising from the high dimensionality of the forecast covariance matrix, particularly when this matrix incorporates localization. For square-root filters, updating the perturbations of the ensemble members from their mean is an especially challenging step, one which generally requires approximations that introduce a trade-off between accuracy and computational cost. This paper describes an ensemble square-root filter which achieves a favorable trade-off between these factors by discretizing an integral representation of the Kalman filter update equations, and in doing so, avoids a direct evaluation of the matrix square-root in the perturbation update stage. This algorithm, which we call InFo-ESRF, is parallelizable and uses a preconditioned Krylov method to update perturbations to a high degree of accuracy. Through numerical experiments with both a Gaussian forecast model and a multi-layer Lorenz-type system, we demonstrate that InFo-ESRF is competitive or superior to several existing localized square-root filters in terms of accuracy and cost.
    \end{abstract}

    \section{Introduction} \label{section:intro}

\indent

In meteorology, estimating the current state of the atmosphere is a challenging task which requires synthesizing information from both empirical data and numerical models. Data assimilation \cite{daley, law_da_intro}, or DA, accomplishes this in a Bayesian manner by using empirical observations to condition a model-based prior. Ensemble Kalman filters \cite{burgers_perturbed_obs, evensen_enkf}, or 
EnKFs, are a class of DA algorithms which represent the prior in terms of a Monte-Carlo ``forecast ensemble'' of model predictions, which is transformed into an ``analysis ensemble'' representing a sample from the posterior. This paper is concerned with deterministic EnKFs, also called ensemble square-root filters \cite{whitaker_ensrf}. In this class of filters, the DA transformation takes the form of a ``mean update'' that recenters the ensemble according to the data, followed by a ``perturbation update'' that adjusts deviations from the ensemble mean to reflect the change in uncertainty.

A typical weather model will simulate $n \approx 10^9$ variables, leading to a prior covariance matrix (or ``forecast covariance'') whose array representation has $n^2 \approx 10^{18}$ entries. Meteorological applications of EnKFs must therefore contend with the fact that the forecast covariance matrix is far too large to be explicitly represented as an array. This challenge is particularly acute when the forecast uses a localized ensemble covariance, i.e., one obtained by regularizing the empirical covariance of the forecast ensemble with an \emph{a priori} assumption on spatial correlation decay \cite{hamill_distance_dependent_filtering}. This is because, unlike the raw empirical covariance, a localized ensemble covariance generally cannot be represented in low-rank form.

It is, however, possible to efficiently represent a localized ensemble covariance \emph{as a linear operator} acting on an arbitrary vector \cite{farchi_bocquet_localization}. Many other types of covariance models, including diffusion-based models \cite{weaver_courtier_diffusion_modeling} and spectrally localized ensemble covariances \cite{buehner_spectral_localization}, share this convenient property. It is therefore advantageous to design DA algorithms which interface with the forecast covariance in the ``operator access'' model rather than the ``array access'' model. The mean-update stage of a square-root filter, which entails solving a linear system involving the forecast covariance, can be readily accomplished in the operator access model using Krylov subspace methods such as conjugate gradients, MINRES, or GMRES \cite{golub_van_loan, saad_iterative_methods, trefethen_bau}. The perturbation-update stage, however, involves applying an ``adjustment'' matrix or ``modified Kalman gain'' matrix that incorporates the forecast covariance through a square-root \cite{anderson_eakf, bishop_getkf, farchi_bocquet_localization, steward_matrix_functions}. Manipulating the square-root of a matrix in the operator access model is much more challenging.

This paper develops an ensesmble square-root filter which uses a discretized integral representation of the Kalman update equations. The resulting filter updates perturbations by solving a parallelizable collection of symmetric linear systems via Krylov methods, thereby avoiding a direct evaluation of the square-root. We call this algorithm InFo-ESRF (``Integral-Form Ensemble Square-Root Filter''), and we will explain its details in \cref{section:algorithm}. Through numerical experiments in sections \ref{section:single_cycle_da} and \ref{section:cycled_da}, we show that InFo-ESRF has superior accuracy to square-root filters based on ensemble modulation \cite{bishop_getkf, farchi_bocquet_localization}. These experiments will also demonstrate that InFo-ESRF has advantages in computational efficiency when compared to square-root filters that use serial updates \cite{shlyaeva_serial_filter}. We will also show how InFo-ESRF can be enhanced with preconditioners, a property which sets it apart from existing square-root filters based on Krylov methods \cite{steward_matrix_functions}. Before this, we will review relevant background material in \cref{section:background}.

\subsection{Notation}

\indent

Vectors will be denoted by italic letters with arrows (e.g., $\vx,\, \vy$), while upper-case vectors (e.g., $\rvX,\, \rvY$) will denote vector-valued random variables. Matrices will be denoted with upper-case bold letters (e.g., $\mX,\, \mY$). \textsc{Matlab} notation will be used to write row and/or column slices of matrices. For a square matrix $\mM$ with real nonnegative eigenvalues, $\mM^{1/2}$ will denote the unique square matrix $\mS$ with real nonnegative eigenvalues such that $\mS^2 = \mM$. Note that if $\mM$ is a symmetric matrix for which $\mM^{1/2}$ exists, then $\mM^{1/2}$ is also symmetric. If in addition $\mM$ is invertible, then so is $\mM^{1/2}$, with $(\mM^{1/2})^{-1} = (\mM^{-1})^{1/2} \defeq \mM^{-1/2}$.

\subsection{Code Availability} A repository of Julia code that reproduces the numerical experiments in this paper can be downloaded from \url{https://github.com/robin-armstrong/info-esrf-experiments}.
    \section{Background} \label{section:background}

\indent

In this section we review necessary background material for this paper. \Cref{subsec:kalman} reviews basic concepts from data assimilation and Kalman filtering, and \cref{subsec:enkf} specializes to ensemble Kalman filters. \Cref{subsection:loc_intro} introduces covariance localization and the computational challenges associated with it. Existing algorithms for localized data assimilation are reviewed in \cref{subsec:localized_filters}.

\subsection{Data Assimilation} \label{subsec:kalman}

\indent

Data assimilation (DA) is a Bayesian approach for estimating the state vector $\vx \in \R^n$ of a dynamical system by combining a numerical forecast with empirical data, which may be noisy and sparse. In this framework, $\vx$ is modeled as a realization of a random-variable $\rvX_f$ on $\R^n$, whose distribution (the ``forecast distribution'') is based on the output of a numerical model and represents prior information. The observed data $\vy \in \R^d$ is modeled as a realization of a random variable $\rvY$, related to $\rvX_f$ by
\begin{equation*}
\rvY = h(\rvX_f) + \rvE_0,\quad \rvE_0 \sim \cN(\vzero,\, \mR),
\end{equation*}
where $h : \R^n \to \R^d$ is an observation operator (or ``forward operator''), $\rvE_0$ is Gaussian noise independent of $\rvX_f$ that models observation errors resulting from, e.g., instrument error, and $\mR \in \R^{d \times d}$ is an observation-error covariance matrix which we assume is nonsingular.

Given an observation consisting of a realization $\vy$ of $\rvY$, the goal of DA is to compute statistics (especially the mean and covariance) of the conditional random variable $\rvX_f \,|\, \rvY = \vy$. Geoscientific applications of DA feature extremely high system dimensions, complex forecast distributions, and nonlinear measurement operators. Under these challenging conditions, deriving practical algorithms requires simplifying approximations to be made. Two approaches are dominant: first are variational and ensemble-variational DA algorithms \cite{bannister_envar_review}, which approximate $\rvX_f$ as a Gaussian, thereby allowing the mode of $\rvX_f \,|\, \rvY = \vy$ to be estimated via a nonlinear least-squares problem. Second are Kalman filters, which consider only the first two moments of the state and observable:
\begin{equation}
\vmu \defeq \E \begin{bmatrix} \rvX_f \\ h(\rvX_f) \end{bmatrix} = \begin{bmatrix}
\vmu_x \\
\vmu_h
\end{bmatrix}\,,\quad \mSigma \defeq \cov \begin{bmatrix} \rvX_f \\ h(\rvX_f) 
\end{bmatrix} = \begin{bmatrix}
\mSigma_{xx} & \mSigma_{xh} \\
\mSigma_{xh}\tp & \mSigma_{hh} \\
\end{bmatrix}\,.\label{eqn:kalman_assumption}
\end{equation}
Kalman filters compute statistics of $\rvG \,|\, \rvY = \vy$, where $\rvG \sim \cN(\vmu,\, \mSigma)$ is a random variable on $\R^{n + d}$ representing a Gaussian approximation to the forecast distribution in joint state-observation space. It is a standard result that the marginal distribution of $\rvG \,|\, \rvY = \vy$ in state space is $\cN(\vmu_a,\, \mSigma_a)$ with
\begin{equation}
\vmu_a = \vmu_x + \mK(\vy - \vmu_h),\quad \mSigma_a = \mSigma_{xx} - \mK\mSigma_{xh}\tp, \label{eqn:kalman_filter}
\end{equation}
where
\begin{equation*}
\mK = \mSigma_{xh}(\mR + \mSigma_{hh})^{-1}
\end{equation*}
is the \emph{Kalman gain matrix}. We call $\vmu_a,\, \mSigma_a$ the \emph{Kalman filter analysis}. Note that if $\rvX_f$ and $h(\rvX_f)$ are jointly Gaussian, (equivalently, if $\rvX_f$ is Gaussian and $h$ is affine), then $\rvX_f \,|\, \rvY = \vy \sim \cN(\vmu_a,\, \mSigma_a)$ exactly.

\subsection{Ensemble Kalman Filters} \label{subsec:enkf}

\indent

An ensemble Kalman filter (EnKF) assimilates data in a Monte-Carlo fashion by generating samples from a distribution approximating that of $\rvX_f \,|\, \rvY = \vy$. Specifically, an EnKF seeks to transform samples from $\rvX_f$ into samples from a new random variable $\rvX_a$ whose first two moments match the Kalman filter analysis given by \cref{eqn:kalman_filter}. To describe the algorithmic details of an EnKF, we write
\begin{equation*}
\rvX_f = \vmu_x + \rvZ_f,
\end{equation*}
where $\vmu_x = \E[\rvX_f],\, \E[\rvZ_f] = \vzero$ and $\cov[\rvZ_f] = \cov[\rvX_f] = \mSigma_{xx}$. Conceptually, an EnKF can be separated into a ``mean update'' which transforms $\vmu_x \mapsto \vmu_a$ according to \cref{eqn:kalman_filter}, followed by a ``perturbation update'' which transforms $\rvZ_f$ into a new random variable $\rvZ_a$ such that $\E[\rvZ_a] = \vzero$ and $\cov[\rvZ_a] = \mSigma_a$. We call $\rvZ_f$ and $\rvZ_a$ the ``forecast perturbation'' and ``analysis perturbation,'' respectively.

Computationally, the mean update is relatively straightforward, involving only a linear system solve (e.g., by conjugate gradients) to evaluate the action of $\mK$ on $\vy - \vmu_h$. The perturbation update is less straightforward; three standard techniques for this step are described in \cref{theorem:ensemble_kalman}. In this theorem and henceforth, we write
\begin{equation*}
h(\rvX_f) = \vmu_h + \rvW_f,
\end{equation*}
so that $\rvW_f$ satisfies $\E[\rvW_f] = \vzero$ and $\cov[\rvW_f] = \mSigma_{hh}$.
\begin{theorem}\label{theorem:ensemble_kalman}
The following three transformations produce a random variable $\rvZ_a^{(i)}$, $i = 1,\, 2,\, 3$, satisfying $\E[\rvZ_a^{(i)}] = \vzero$ and $\cov[\rvZ_a^{(i)}] = \mSigma_a$:
\begin{equation}
\rvZ_a^{(1)} = \mA\rvZ_f, \label{eqn:adjustment_enkf}
\end{equation}
where $\mA$ is any $n \times n$ ``adjustment matrix'' satisfying $\mA\mSigma_{xx}\mA\tp = \mSigma_a$,
\begin{equation}
\rvZ_a^{(2)} = \rvZ_f - \mK(\rvW_f + \rvE), \label{eqn:stochastic_enkf}
\end{equation}
where $\rvE \sim \cN(\vzero,\, \mR)$ is independent of $(\rvX_f,\, \rvY)$, and
\begin{equation}
\rvZ_a^{(3)} = \rvZ_f - \mG\rvW_f, \label{eqn:gain_form_etkf}
\end{equation}
where
\begin{equation}
\mG = \mSigma_{xh}(\mR + \mSigma_{hh} + \mR(\mI + \mR^{-1}\mSigma_{hh})^{1/2})^{-1} \label{eqn:modified_gain}
\end{equation}
is called a \emph{modified Kalman gain matrix}.
\end{theorem}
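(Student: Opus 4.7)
My plan is to verify, for each $i \in \{1,2,3\}$, the two conclusions $\E[\rvZ_a^{(i)}] = \vzero$ and $\cov[\rvZ_a^{(i)}] = \mSigma_a$ by direct calculation, using linearity together with the identities $\E[\rvZ_f] = \E[\rvW_f] = \vzero$, $\cov[\rvZ_f] = \mSigma_{xx}$, $\cov[\rvW_f] = \mSigma_{hh}$, and $\cov[\rvZ_f,\rvW_f] = \mSigma_{xh}$ inherited from \cref{eqn:kalman_assumption} and from the definitions of $\rvZ_f$ and $\rvW_f$. Throughout I will use the symmetric form $\mSigma_a = \mSigma_{xx} - \mSigma_{xh}(\mR + \mSigma_{hh})^{-1}\mSigma_{xh}\tp$ of the target covariance.

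Case (1) is immediate: the mean vanishes by linearity, and $\cov[\mA\rvZ_f] = \mA\mSigma_{xx}\mA\tp = \mSigma_a$ by hypothesis on $\mA$. For case (2) the mean is again immediate, and using independence of $\rvE$ from $(\rvZ_f,\rvW_f)$ the covariance expands as
\begin{equation*}
\cov[\rvZ_a^{(2)}] = \mSigma_{xx} - \mSigma_{xh}\mK\tp - \mK\mSigma_{xh}\tp + \mK(\mR + \mSigma_{hh})\mK\tp,
\end{equation*}
and substituting $\mK(\mR + \mSigma_{hh}) = \mSigma_{xh}$ collapses the last term to $\mSigma_{xh}\mK\tp$, which cancels the second term and leaves $\mSigma_a$.

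Case (3) is the substantive one. The mean is zero by linearity, and an analogous expansion (now with no independent-noise contribution) gives
\begin{equation*}
\cov[\rvZ_a^{(3)}] = \mSigma_{xx} - \mSigma_{xh}\mG\tp - \mG\mSigma_{xh}\tp + \mG\mSigma_{hh}\mG\tp.
\end{equation*}
Writing $\mathbf{S} = \mR + \mSigma_{hh}$ and $\mathbf{T} = \mathbf{S} + \mR(\mI + \mR^{-1}\mSigma_{hh})^{1/2}$ so that $\mG = \mSigma_{xh}\mathbf{T}^{-1}$, the required equality $\cov[\rvZ_a^{(3)}] = \mSigma_a$ reduces, after stripping $\mSigma_{xh}$ off the left and $\mSigma_{xh}\tp$ off the right, to the matrix identity
\begin{equation*}
\mathbf{T} + \mathbf{T}\tp - \mSigma_{hh} = \mathbf{T}\mathbf{S}^{-1}\mathbf{T}\tp.
\end{equation*}

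The principal obstacle is proving this identity while contending with the fact that $(\mI + \mR^{-1}\mSigma_{hh})^{1/2}$ is not itself symmetric. I plan to resolve both issues at once by passing to symmetric coordinates: set $\mathbf{Q} = (\mI + \mR^{-1/2}\mSigma_{hh}\mR^{-1/2})^{1/2}$, which is symmetric positive definite because its argument is. The similarity $\mI + \mR^{-1}\mSigma_{hh} = \mR^{-1/2}\mathbf{Q}^2\mR^{1/2}$ then forces $(\mI + \mR^{-1}\mSigma_{hh})^{1/2} = \mR^{-1/2}\mathbf{Q}\mR^{1/2}$ under the paper's square-root convention, so that $\mR(\mI + \mR^{-1}\mSigma_{hh})^{1/2} = \mR^{1/2}\mathbf{Q}\mR^{1/2}$ is symmetric and hence $\mathbf{T} = \mR^{1/2}\mathbf{Q}(\mathbf{Q}+\mI)\mR^{1/2}$ is symmetric as well. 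Using $\mathbf{S} = \mR^{1/2}\mathbf{Q}^2\mR^{1/2}$ and $\mSigma_{hh} = \mR^{1/2}(\mathbf{Q}^2 - \mI)\mR^{1/2}$, a short spectral calculation shows that both sides of the target identity equal $\mR^{1/2}(\mathbf{Q}+\mI)^2\mR^{1/2}$, closing case (3).
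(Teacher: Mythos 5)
Your proof is correct, but it takes a genuinely different route from the paper's. The paper disposes of cases (1) and (2) by citation and proves case (3) only indirectly: it first shows (\cref{theorem:quadrature_trick}, proved in \cref{section:integral_proof}) that the integral-form update $\rvZ_f - \int_{t\in\cI} p_t\mK(s_t)\rvW_f\,dt$ has covariance $\mSigma_a$ --- an argument that works in joint state--observation space with the adjustment matrix $(\mGamma_{a,\varepsilon}\mGamma_{f,\varepsilon}^{-1})^{1/2}$, an $\varepsilon$-regularization to handle the possibly singular joint covariance, and Sherman--Morrison--Woodbury --- and then shows (\cref{prop:integral_getkf_equivalence}, via spectral decomposition and partial fractions) that $\int_{t\in\cI}p_t\mK(s_t)\,dt = \mG$. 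Your argument bypasses all of that machinery: you expand $\cov[\rvZ_f - \mG\rvW_f]$ directly, reduce the claim to the quadratic identity $\mathbf{T} + \mathbf{T}\tp - \mSigma_{hh} = \mathbf{T}\mathbf{S}^{-1}\mathbf{T}\tp$, and verify it by the symmetrization $\mathbf{Q} = (\mI + \mR^{-1/2}\mSigma_{hh}\mR^{-1/2})^{1/2}$, under which both sides become $\mR^{1/2}(\mathbf{Q}+\mI)^2\mR^{1/2}$ (the key observation that $\mR(\mI+\mR^{-1}\mSigma_{hh})^{1/2} = \mR^{1/2}\mathbf{Q}\mR^{1/2}$ is symmetric, via uniqueness of the principal square root, is exactly the right way to tame the non-symmetric square root in \cref{eqn:modified_gain}). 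What each approach buys: yours is shorter, self-contained, and is precisely the ``complete proof from first principles'' that the paper says is missing from the literature for case (3); the paper's route is longer but earns its keep because the integral representation and its correctness are needed anyway as the algorithmic core of InFo-ESRF, so the theorem falls out of machinery the paper must build regardless. One cosmetic remark: ``stripping $\mSigma_{xh}$ off the left and $\mSigma_{xh}\tp$ off the right'' is only a sufficient reduction (since $\mSigma_{xh}$ need not have full rank), but that is all you need --- you prove the stripped identity and conclude the full one by multiplying back, so the logic is sound.
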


\begin{proof}
The result for $i = 1$ follows from basic properties of random variables under linear transformations. The result for $i = 2$ is well-known in data assimilation literature, and appeared in early papers on ensemble Kalman filters \cite{burgers_perturbed_obs, houtekamer_mitchell_enkf}. The result for $i = 3$ has been used in the development of several localized filters \cite{bishop_getkf, farchi_bocquet_localization, steward_matrix_functions, whitaker_ensrf}, but to the best of our knowledge, the literature does not contain a complete proof from first principles of probability. This paper provides such a proof in the form of two results to be stated in \cref{section:algorithm}. Specifically, \cref{prop:integral_getkf_equivalence} shows that \cref{eqn:gain_form_etkf} is equivalent to an integral-form perturbation update introduced in \cref{section:algorithm}, and \cref{theorem:quadrature_trick} shows that this integral-form update yields a correctly distributed analysis ensemble.
\end{proof}

Existing filters based on \cref{eqn:adjustment_enkf}, notably the ensemble adjustment Kalman filter \cite{anderson_eakf}, use
\begin{equation}
\mA_1 = \mSigma_a^{1/2}\mSigma_{xx}^{-1/2}. \label{eqn:adjustment_matrix}
\end{equation}
A technical lemma in this paper, which we note here for its independent significance, states that
\begin{equation*}
\mA_2 = (\mSigma_a\mSigma_{xx}^{-1})^{1/2}
\end{equation*}
is also a valid adjustment matrix\footnote{Note that in general, $\mA_1 \neq \mA_2$.}. We prove this in \cref{section:adjustment_proof}. We also note that \cref{eqn:modified_gain} is not the only possible formula for a modified Kalman gain. For example, when $h(\vx) = \mH\vx$ is linear, any $\mG$ such that $\mSigma_a = (\mI - \mG\mH)\mSigma_{xx}(\mI - \mG\mH)\tp$ will suffice.

In practical applications of ensemble data assimilation, $\rvX_f$ is represented by a ``forecast ensemble.'' This is a finite sample
\begin{equation*}
\vx_f^{\,(1)},\, \ldots,\, \vx_f^{\,(m)} \in \R^n,\quad m \ll n,
\end{equation*}
usually representing an ensemble of model outputs from different initial conditions, which is transformed into a sample $\vx_a^{\,(1)},\, \ldots,\, \vx_a^{\,(m)}$, called the ``analysis ensemble,'' according to \cref{theorem:ensemble_kalman}. First, a combination of physical and statistical modeling techniques \cite{bannister_covar_1, bannister_covar_2} are used to obtain estimates of the prior statistics,
\begin{equation}
\vmu_x \approx \tildemu_x,\quad \vmu_h \approx \tildemu_h,\quad \mSigma_{xx} \approx \widetilde{\mSigma}_{xx},\quad \mSigma_{xh} \approx \widetilde{\mSigma}_{xh},\quad \mSigma_{hh} \approx \widetilde{\mSigma}_{hh}, \label{eqn:estimated_forecast_stats}
\end{equation}
which define an estimated Kalman gain $\widetilde{\mK} = \widetilde{\mSigma}_{xh}(\mR + \widetilde{\mSigma}_{hh})^{-1}$. After the mean update $\tildemu_a = \tildemu_f + \widetilde{\mK}(\vy - \tildemu_h)$, we consider the normalized forecast perturbations in state and observation space,
\begin{equation}
\vz_f^{\,(i)} = \frac{\vx_f^{\,(i)} - \tildemu_x}{\sqrt{m - 1}},\qquad \vw_f^{\,(i)} = \frac{h(\vx_f^{\,(i)}) - \tildemu_h}{\sqrt{m - 1}}. \label{eqn:normalized_perturbations}
\end{equation}
The stochastic EnKF \cite{burgers_perturbed_obs, houtekamer_mitchell_enkf}, based on \cref{eqn:stochastic_enkf}, generates i.i.d.\ samples $\rvE^{\,(i)} \sim \cN(\vzero,\, \mR)$ and updates $\vz_a^{\,(i)} = \vz_f^{\,(i)} - \widetilde{\mK}(\vw_f^{\,(i)} + \rvE^{(i)})$ for each $i$. This is usually performed simultaneously with the mean update by setting
\begin{equation*}
\vx_a^{\,(i)} = \vx_f^{\,(i)} + \widetilde{\mK}\left(\vy - h\left(\vx_f^{\,(i)}\right) - \rvE^{\,(i)}\right), \qquad i = 1,\, \ldots,\, m.
\end{equation*}
The stochastic EnKF was followed by several deterministic filters that do not use random perturbations, including the ensemble adjustment Kalman filter (EAKF) \cite{anderson_eakf}, the serial ensemble square-root filter (ESRF) \cite{whitaker_ensrf}, the ensemble transform Kalman filter (ETKF) \cite{bishop_etkf}, and the singular evolutive interpolated Kalman filter (SEIK) \cite{pham_seik}. Most relevant for this paper is the gain-form ensemble transform Kalman filter (GETKF) \cite{bishop_getkf}, which is based on \cref{eqn:gain_form_etkf}, and uses
\begin{equation}
\vz_a^{\,(i)} = \vz_f^{\,(i)} - \widetilde{\mG}\vw_f^{\,(i)} \label{eqn:getkf}
\end{equation}
where $\widetilde{\mG} = \widetilde{\mSigma}_{xh}(\mR + \widetilde{\mSigma}_{hh} + \mR(\mI + \mR^{-1}\widetilde{\mSigma}_{hh})^{1/2})^{-1}$ is an estimate of the modified Kalman gain.

\subsection{Challenges Associated with Sample Errors and Localization} \label{subsection:loc_intro}

\indent

In practical applications of ensemble Kalman filtering, complications arise from inexact knowledge of the forecast statistics, particularly the forecast covariance matrices $\mSigma_{xx},\, \mSigma_{xh}$, and $\mSigma_{hh}$. The most straightforward way to estimate these matrices is to approximate them directly from the forecast ensemble:
\begin{align*}
\widetilde{\mSigma}_{xx} &= \frac{1}{m - 1} \sum_{i = 1}^m (\vx_f^{\,(i)} - \tildemu_x)(\vx_f^{\,(i)} - \tildemu_x)\tp = \mZ\mZ\tp \\
\widetilde{\mSigma}_{xh} &= \frac{1}{m - 1} \sum_{i = 1}^m (\vx_f^{\,(i)} - \tildemu_x)(h(\vx_f^{\,(i)}) - \tildemu_h)\tp = \mZ\mW\tp \\
\widetilde{\mSigma}_{hh} &= \frac{1}{m - 1} \sum_{i = 1}^m (h(\vx_f^{\,(i)}) - \tildemu_h)(h(\vx_f^{\,(i)}) - \tildemu_h)\tp = \mW\mW\tp,
\end{align*}
where $\tildemu_x = \frac{1}{m} \sum_{i = 1}^m  \vx_f^{\,(i)}$, $\tildemu_h = \frac{1}{m}\sum_{i = 1}^m h(\vx_f^{\,(i)})$, and
\begin{equation*}
\mZ = [ \vz_f^{\,(1)} \: \cdots \: \vz_f^{\,(m)} ] \in \R^{n \times m},\qquad \mW = [ \vw_f^{\,(1)} \: \cdots \: \vw_f^{\,(m)} ] \in \R^{d \times m}
\end{equation*}
are normalized forecast perturbation matrices. Because the ensemble size $m$ is generally much less than the model dimension $n$, these estimates are extremely rank deficient and contain significant sampling errors. The problem is illustrated in \cref{fig:covar_comparison}, which compares a full-rank covariance matrix to an ensemble estimate. The low-rank structure of the ensemble estimate gives rise to noise far off the diagonal, creating the appearance of correlations between model variables that are widely separated in space. Assimilating data using this estimate would result in spurious updates to model variables unrelated to those being measured, generally leading to an analysis ensemble with unrealistically low variance.

\begin{figure}
    \centering
    \includegraphics[scale=.7]{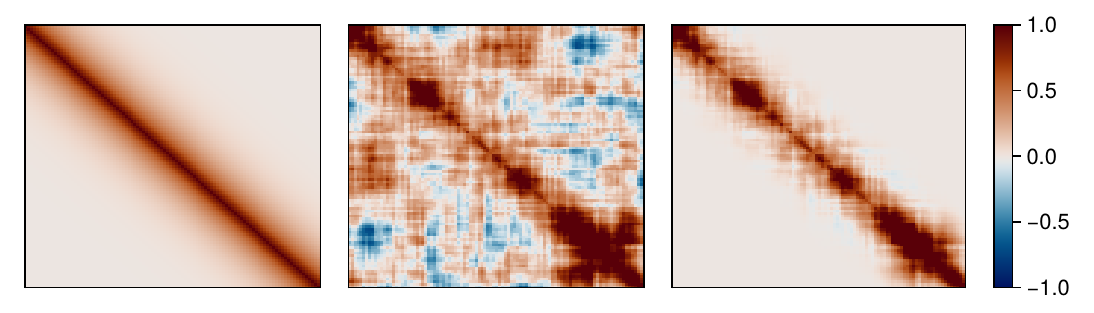}
    \caption{Left: a $100 \times 100$ covariance matrix $\mSigma_{xx}$. Middle: an ensemble estimate $\widetilde{\mSigma}_{xx} = \mZ\mZ\tp$, where $\mZ$ is derived from 20 independent samples from $\cN(\vzero,\, \mSigma_{xx})$. Right: a localized estimate $\widehat{\mSigma}_{xx} = \mL \circ (\mZ\mZ\tp)$, where $\mL(i,j) = \cG(|i - j|)$ and $\cG$ is the Gaspari-Cohn localization function \cite{gaspari_cohn} of radius 20.}
    \label{fig:covar_comparison}
\end{figure}
Localization is a class of techniques for mitigating the harmful effects of forecast covariance rank deficiency. For example, a large class of ``observation space'' methods address rank deficiency by localizing the impact of a new observation to model variables within some radius of the measurement location; see, e.g., \cite{hamill_distance_dependent_filtering, haugen_evensen_indian_ogcm}. These techniques may perform poorly when the measurement is not clearly associated with a single point in space; for example, satellite radiance measurements observe a weighted vertical integral across an entire atmospheric column \cite{bishop_getkf}. ``Model-space localization'' forgoes assumptions on observation locality, and instead estimates the state-space forecast covariance as
\begin{equation}
\widetilde{\mSigma}_{xx} = \widehat{\mSigma}_{xx} \defeq \mL \circ (\mZ\mZ\tp), \label{eqn:localized_state_covar}
\end{equation}
where $\circ$ denotes an element-wise product, and $\mL \in [0,1]^{n \times n}$ is a symmetric positive semidefinite matrix whose $(i,j)$ entry attenuates correlations between variables $\rvX_i$ and $\rvX_j$. \Cref{fig:covar_comparison} shows how this can lead to a more realistic approximation of the overall forecast covariance. Often $\mL$ is constructed from the Gaspari-Cohn localization function \cite{gaspari_cohn} which attenuates correlations according to the degree of spatial separation. The Schur product theorem \cite[Thm 7.5.3]{horn_johnson} guarantees that $\widehat{\mSigma}_{xx}$ is positive-semidefinite, and is therefore mathematically valid as a covariance approximation.

Finally, if $h$ is differentiable, then localized ensemble approximations of $\mSigma_{xh}$ and $\mSigma_{hh}$ can be obtained by linearizing $h$, yielding
\begin{equation*}
\widehat{\mSigma}_{xh} \defeq \widehat{\mSigma}_{xx}\mH\tp,\quad \widehat{\mSigma}_{hh} \defeq \mH\widehat{\mSigma}_{xx}\mH\tp,
\end{equation*}
where $\mH(i,j) = \partial_j h_i(\tildemu_x)$.

\subsection{Localized Square-Root Filters} \label{subsec:localized_filters}

\indent

While model-space localization produces full-rank forecast covariances and effectively filters out spurious correlations, the fact that $\widehat{\mSigma}_{xx},\, \widehat{\mSigma}_{xh},\, \widehat{\mSigma}_{hh}$ do not have low-rank ensemble representations introduces a new set of computational challenges. The perturbation update stage of a square-root filter becomes especially difficult in the absence of an ensemble representation, owing to the high computational cost of evaluating the square-root of a large full-rank matrix. Many techniques have been developed to update perturbations without evaluating the full square-root, and for this paper, the most relevant techniques are as follows.
\begin{itemize}
\item \underline{Serial data assimilation:} When observation errors are uncorrelated, vector-valued data can be assimilated through a sequence of serial updates that each assimilate one scalar component. Shlyaeva, Snyder, and Whitaker \cite{shlyaeva_serial_filter} implemented an ensemble square-root filter with model-space localization in this manner, with the advantage that the matrix inverses and inverse square-roots that define the Kalman gain and modified Kalman gain matrices reduce to trivial scalar computations. The presence of model space localization means that this filter is noncommutative with respect to the order in which data components are assimilated \cite{steward_matrix_functions}.

\item \underline{Ensemble augmentation:} These techniques involve constructing an ensemble approximation of the localized covariance,
\begin{equation*}
\widehat{\mSigma}_{xx} \approx \mZ_*\mZ_*\tp,
\end{equation*}
where $\mZ_* \in \R^{n \times k}$ is a matrix of ``augmented forecast perturbations,'' and where the augmented ensemble size $k$ must be small enough to allow efficient computation, but large enough to preserve the effects of localization. The GETKF can then be efficiently run with model-space localization by using the empirical covariances of the augmented ensemble in state and observation space to form the Kalman gain and modified Kalman gain matrices \cite{bishop_getkf, farchi_bocquet_localization}. Bishop et al.\ \cite{bishop_getkf} used a ``modulation" technique \cite{BH09b, buehner_background_covar} based on a precomputed spectral decomposition of $\mL$ to construct $\mZ_*$. While relatively inexpensive, this produces a far-from-optimal approximation of $\widehat{\mSigma}_{xx}$ \cite{farchi_bocquet_localization}. Farchi and Bocquet \cite{farchi_bocquet_localization} used a randomized SVD \cite{halko_finding_structure_with_randomness} to construct $\mZ_*$ such that $\mZ_*\mZ_*\tp$ approximates $\widehat{\mSigma}_{xx}$ nearly optimally for the given number of augmented ensemble members. This approach uses the fact that $\widehat{\mSigma}_{xx}$ has an efficient, exact representation as a linear operator via the formula
\begin{equation}
\widehat{\mSigma}_{xx}\vu = \sum_{i = 1}^m \vz_f^{\,(i)} \circ \ell(\vz_f^{\,(i)} \circ \vu), \label{eqn:farchi_bocquet_matvec}
\end{equation}
where $\ell(\vv) = \mL\vv$ \cite{farchi_bocquet_localization}. Note that $\mL$ frequently posesses special structure that permits an efficient matrix-free implementation of $\ell$.

\item \underline{Krylov subspace methods:} If $h$ is composed with a whitening transformation so that $\mR = \mI$, then the main difficulty in the localized GETKF consists of evaluating the operator actions of $f_1(\widehat{\mSigma}_{hh})$ and $f_2(\widehat{\mSigma}_{hh})$, where
\begin{equation*}
f_1(x) = \frac{1}{1 + x},\quad f_2(x) = \frac{1}{1 + x + \sqrt{1 + x}}
\end{equation*}
act on the eigenvalues of $\widehat{\mSigma}_{hh}$. Steward et al.\ \cite{steward_matrix_functions} demonstrated a GETKF which performs these evaluations with Krylov subspace methods \cite{golub_van_loan, saad_iterative_methods, trefethen_bau}. These methods approximate matrix-vector products of the form $f(\mM)\vz$, where $\mM \in \R^{d \times d}$, $\vz \in \R^d$, and $f$ acts on the eigenvalues of $\mM$, using only operator access to $\mM$. Because $\widehat{\mSigma}_{xx}$ (and therefore $\widehat{\mSigma}_{hh}$) have exact operator representations through \cref{eqn:farchi_bocquet_matvec}, Krylov methods allow for a more accurate perturbation update than would result from approximating $\widehat{\mSigma}_{hh}$ with, e.g., ensemble augmentation.

Krylov methods compute $p(\mM)\vz$, where $p$ is a polynomial approximating $f$ in a suitable manner. This is done by evaluating $f$ within the lower-dimensional \emph{Krylov subspace}
\begin{equation*}
\cK_k(\mM,\, \vz) \defeq \vspan\{ \vz,\, \mM\vz,\, \ldots,\, \mM^{k-1}\vz \},
\end{equation*}
where $k = 1 + \deg p \ll d$. This subspace is constructed iteratively, where each iteration requires a single matrix-vector product with $\mM$. The particular Krylov method used by Steward et al.\ \cite{steward_matrix_functions} makes the approximation
\begin{equation*}
f(\mM)\vz \approx \mV_k f(\mT_k) \mV_k\tp\vz,
\end{equation*}
where the columns of $\mV_k$ are an orthonormal basis for $\cK_k(\mM,\, \vz)$ and $\mT_k = \mV_k\tp\mM\mV_k$. This is equivalent to computing $p(\mM)\vz$, where $p$ is the degree-$(k-1)$ polynomial interpolation of $f$ on the spectrum of $\mT_k$ \cite[Thm 13.5]{higham_functions_of_matrices}. For this reason, using this method successfully requires that the eigenvalues of $\mT_k$ accurately approximate those of $\mM$.
\end{itemize}

    \section{An Integral-Form Ensemble Square-Root Filter} \label{section:algorithm}

\indent

This section presents the main algorithmic contribution of this paper, a filter which we call InFo-ESRF (``Integral Form Ensemble Square-Root Filter''). The meaning of ``InFo'' will be made clear below, while ``ESRF'' honors the ESRF of \cite{whitaker_ensrf}, which was the first filter to use a modified Kalman gain.

InFo-ESRF is based on the observation that the mean-update stage of a square-root filter uses relatively straightforward computational machinery, namely a symmetric linear system solve, to apply the (standard) Kalman gain matrix. In contrast, deterministic perturbation update methods require more challenging computations to apply the matrix square-roots in the modified Kalman gain and ensemble adjustment matrices; see \cref{eqn:modified_gain} and \cref{eqn:adjustment_matrix}. InFo-ESRF avoids these complications by reformulating the perturbation update in terms of symmetric linear system solves, as explained below in \cref{subsec:info_esrf_theory,subsec:info_esrf_algorithm}. In doing so, it mitigates several challenges present in existing localized filters, including the following.
\begin{enumerate}
\item Augmentation methods (cf.\ \cref{subsec:localized_filters}) approximate the action of $\widehat{\mSigma}_{xx}$ on a $k$-dimensional subspace defined by the augmented ensemble, but even optimally approximating the action of $\widehat{\mSigma}_{xx}$ for a given $k$ does not guarantee that the action of the modified Kalman gain will be accurately recovered. Formulating the perturbation update in terms of symmetric linear system solves enables InFo-ESRF to use the machinery of Krylov subspace methods, which can directly approximate the action of a Kalman gain matrix with stronger accuracy guarantees.

\item Krylov methods for localized ensemble data assimilation \cite{steward_matrix_functions} have, thus far, involved approximating matrix functions of $\mR^{-1/2}\widehat{\mSigma}_{hh}\mR^{-1/2}$ within a Krylov subspace, and thus require the subspace to capture a high-fidelity approximation of this matrix's eigenvalues (cf.\ \cref{subsec:localized_filters}). These methods cannot use preconditioning to accelerate the Krylov iteration, as preconditioners intentionally distort a matrix's eigenvalue spectrum. In contrast, InFo-ESRF, which uses Krylov subspaces only to perform symmetric linear system solves, can be used with a preconditioner.
\end{enumerate}

\subsection{Theoretical Foundations} \label{subsec:info_esrf_theory}

\indent

The perturbation update procedure used by InFo-ESRF can be understood in analogy with the stochastic EnKF, which updates perturbations with a standard Kalman gain matrix while injecting additive noise into the observation vector (cf.\ \cref{subsec:enkf}). InFo-ESRF also updates perturbations using a standard Kalman gain, but injects \emph{multiplicative} noise in the form of a scalar inflation applied to the observation error covariance $\mR$. More precisely, perturbations are updated using the gain matrix
\begin{equation}
\mK(s) = \mSigma_{xh}((s + 1)\mR + \mSigma_{hh})^{-1}, \label{eqn:inflated_kalman_gain}
\end{equation}
which is the standard Kalman gain with observation errors inflated by $s > 0$. \Cref{theorem:stochastic_inflation} states that, when averaged over an appropriate distribution of inflation values, this procedure yields an analysis ensemble with the required covariance.
\begin{theorem} \label{theorem:stochastic_inflation}
Define a probability density $p(s) = [\pi \sqrt{s}(s+1)]^{-1}$ over $s \in (0,\infty)$. If
\begin{equation}
\rvZ_a = \rvZ_f - \int_0^\infty p(s)\mK(s)\rvW_f\,ds, \label{eqn:stochastic_obserr_inflation}
\end{equation}
then $\E[\rvZ_a] = \vzero$ and $\cov[\rvZ_a] = \mSigma_a$.
\end{theorem}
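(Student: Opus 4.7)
The plan is to verify the mean claim directly and then reduce the covariance claim to a scalar integral identity. Since $\E[\rvW_f] = \vzero$, exchanging expectation with the deterministic integral gives $\E[\rvZ_a] = \vzero$ immediately. For the covariance, I would pull the random vector past the integral to write $\int_0^\infty p(s)\mK(s)\rvW_f\,ds = \mM\rvW_f$, where $\mM \defeq \int_0^\infty p(s)\mK(s)\,ds$ is a deterministic $n \times d$ matrix. This integral converges absolutely, since $p$ is integrable on $(0,\infty)$ and $\|\mK(s)\| = O(1/s)$ as $s \to \infty$. Factoring out $\mSigma_{xh}$ gives $\mM = \mSigma_{xh}\mN$ with $\mN \defeq \int_0^\infty p(s)\,((s+1)\mR + \mSigma_{hh})^{-1}\,ds$ symmetric. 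Expanding $\cov[\rvZ_a] = \mSigma_{xx} - 2\mSigma_{xh}\mN\mSigma_{xh}\tp + \mSigma_{xh}\mN\mSigma_{hh}\mN\mSigma_{xh}\tp$ using $\cov[\rvZ_f,\rvW_f] = \mSigma_{xh}$ and $\cov[\rvW_f] = \mSigma_{hh}$, and matching against $\mSigma_a = \mSigma_{xx} - \mSigma_{xh}(\mR + \mSigma_{hh})^{-1}\mSigma_{xh}\tp$, the covariance claim reduces to the matrix identity
\begin{equation*}
    2\mN - \mN\mSigma_{hh}\mN = (\mR + \mSigma_{hh})^{-1}.
\end{equation*}

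Next I would simultaneously diagonalize both sides via a whitening transform. Diagonalize $\mR^{-1/2}\mSigma_{hh}\mR^{-1/2} = \mQ\mLambda\mQ\tp$ with $\mQ$ orthogonal and $\mLambda = \mathrm{diag}(\lambda_1,\ldots,\lambda_d)$. Then $(s+1)\mR + \mSigma_{hh} = \mR^{1/2}\mQ((s+1)\mI + \mLambda)\mQ\tp\mR^{1/2}$, so $\mN = \mR^{-1/2}\mQ\mD\mQ\tp\mR^{-1/2}$ with $\mD$ diagonal of entries $D(\lambda_i) \defeq \int_0^\infty p(s)/(s+1+\lambda_i)\,ds$. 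Because $\mN\mSigma_{hh}\mN = \mR^{-1/2}\mQ\mD\mLambda\mD\mQ\tp\mR^{-1/2}$ and $(\mR+\mSigma_{hh})^{-1} = \mR^{-1/2}\mQ(\mI+\mLambda)^{-1}\mQ\tp\mR^{-1/2}$, the matrix identity collapses to the scalar statement $2D(\lambda) - \lambda D(\lambda)^2 = 1/(1+\lambda)$ for every $\lambda \geq 0$.

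The crux is evaluating $D(\lambda)$. Substituting $s = t^2$ removes the $\sqrt{s}$ factor and produces $D(\lambda) = (2/\pi)\int_0^\infty dt/[(t^2+1)(t^2+1+\lambda)]$. Partial fractions together with the standard evaluation $\int_0^\infty dt/(t^2+a^2) = \pi/(2a)$ give $D(\lambda) = (1+\lambda+\sqrt{1+\lambda})^{-1}$, which extends continuously to $\lambda = 0$. Setting $a = \sqrt{1+\lambda}$, so that $D = 1/[a(a+1)]$ and $\lambda = a^2 - 1$, the scalar identity reduces to $(2a - (a-1))/[a^2(a+1)] = 1/a^2$, an elementary algebraic check. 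The only real obstacle is identifying the correct closed form for $D(\lambda)$: the density $p(s) = [\pi\sqrt{s}(s+1)]^{-1}$ has evidently been engineered so that its Stieltjes-type transform produces exactly the scalar form $(1+\lambda+\sqrt{1+\lambda})^{-1}$ appearing in the modified Kalman gain \cref{eqn:modified_gain}. This incidentally suggests that $\mM$ coincides with the modified Kalman gain of \cref{theorem:ensemble_kalman}, foreshadowing the equivalence to be established in \cref{prop:integral_getkf_equivalence}.
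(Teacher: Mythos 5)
Your proof is correct, but it takes a genuinely different route from the paper's. The paper does not prove \cref{theorem:stochastic_inflation} directly: it establishes the more general \cref{theorem:quadrature_trick} by working in joint state--observation space, writing the analysis covariance as $\mA_\varepsilon \mGamma_{f,\,\varepsilon}\mA_\varepsilon\tp$ with the adjustment matrix $\mA_\varepsilon = (\mGamma_{a,\,\varepsilon}\mGamma_{f,\,\varepsilon}^{-1})^{1/2} = (\mI + \mGamma_{f,\,\varepsilon}\mPi_h\tp\mR^{-1}\mPi_h)^{-1/2}$ (an $\varepsilon$-regularization is needed because the joint forecast covariance may be singular), representing this inverse square root by the integral hypothesis applied eigenvalue-wise, applying Sherman--Morrison--Woodbury, and letting $\varepsilon \to 0$; \cref{theorem:stochastic_inflation} then falls out as the special case $r_t = (\pi\sqrt{t})^{-1}$, $s_t = t$. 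You instead expand $\cov[\rvZ_a]$ directly, whiten and diagonalize to collapse the claim to the scalar identity $2D(\lambda) - \lambda D(\lambda)^2 = (1+\lambda)^{-1}$ with $D(\lambda) = \int_0^\infty p(s)(s+1+\lambda)^{-1}\,ds$, and evaluate $D(\lambda) = (1+\lambda+\sqrt{1+\lambda})^{-1}$ by elementary calculus; all steps check out, including the reduction to a sufficient matrix identity sandwiched by $\mSigma_{xh}$ and the final algebraic verification with $a = \sqrt{1+\lambda}$. Your evaluation of $D$ is essentially the paper's proof of \cref{prop:integral_getkf_equivalence} (\cref{section:modified_kalman_gain}) specialized to this density, so in effect you establish the equivalence with the modified Kalman gain first and then verify directly that the modified gain reproduces $\mSigma_a$ --- the reverse of the paper's logical order. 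What your route buys: it is self-contained, avoids the singular-covariance regularization entirely (since $(s+1)\mR + \mSigma_{hh}$ is invertible whenever $\mR$ is), and makes transparent why the density was engineered as it was. What the paper's route buys: it proves the general reweighted statement of \cref{theorem:quadrature_trick}, which is what actually licenses the elliptic quadrature rule used in the algorithm, and it connects the result to the adjustment-matrix lemma of \cref{section:adjustment_proof}. Your argument as written covers only $p(s) = [\pi\sqrt{s}(s+1)]^{-1}$, though it would extend to the general setting by replacing the explicit evaluation of $D$ with the hypothesis $\int_{t\in\cI} r_t(s_t+c+1)^{-1}\,dt = (c+1)^{-1/2}$ together with the same partial-fraction step.
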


\begin{proof}
See \cref{section:integral_proof}.
\end{proof}

Recall that $\mSigma_a$, referenced in the statement of \cref{theorem:stochastic_inflation} above, denotes the Kalman filter analysis covariance matrix from \cref{eqn:kalman_filter}.

InFo-ESRF updates perturbations by evaluating a quadrature approximation to the integral in \cref{eqn:stochastic_obserr_inflation}. Because the distribution $p$ is extremely heavy-tailed, approximating this integral efficiently requires it to be reweighted in a manner that depends on the structure of the forecast covariance. \Cref{theorem:quadrature_trick} describes a general reweighting technique, and to state the theorem we define
\begin{equation*}
\mC = \mR^{-1/2}\mSigma_{hh}\mR^{-1/2},
\end{equation*}
representing the prior observable covariance normalized by observation error.
\begin{theorem} \label{theorem:quadrature_trick}
Let $\cI \subseteq \R$ be an interval, and let $r_t,\, s_t \geq 0$ be continuous functions of $t \in \cI$ such that
\begin{equation*}
\int_{t \in \cI} \frac{r_t}{s_t + c + 1}\, dt = \frac{1}{\sqrt{c+1}}
\end{equation*}
for all $c$ in some open set containing $\lambda(\mC) \union \{ 0 \}$. Define $p_t = r_t(s_t + 1)^{-1}$. Then $\int_{t \in \cI} p_t\, dt = 1$, and if
\begin{equation}
\rvZ_a = \rvZ_f - \int_{t \in \cI} p_t \mK(s_t)\rvW_f\, dt, \label{eqn:integral_form_update}
\end{equation}
then $\E[\rvZ_a] = \vzero$ and $\cov[\rvZ_a] = \mSigma_a$.
\end{theorem}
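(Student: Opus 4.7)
The plan is to proceed in three stages: verify the normalization $\int_{\cI} p_t\, dt = 1$ directly from the hypothesis; expand $\cov[\rvZ_a]$ by bilinearity and reduce to a scalar identity via a whitening transformation and spectral decomposition of $\mC$; and establish that scalar identity through a partial-fractions argument that again invokes the hypothesis. Setting $c = 0$ in the hypothesis immediately gives $\int_{\cI} r_t/(s_t+1)\, dt = 1$, i.e.\ $\int_{\cI} p_t\, dt = 1$, and the mean statement $\E[\rvZ_a] = \vzero$ is immediate from $\E[\rvZ_f] = \E[\rvW_f] = \vzero$ and linearity.

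For the covariance I would expand using bilinearity together with $\cov[\rvZ_f] = \mSigma_{xx}$, $\cov[\rvZ_f, \rvW_f] = \mSigma_{xh}$, and $\cov[\rvW_f] = \mSigma_{hh}$, obtaining a sum of four contributions: $\mSigma_{xx}$, two mixed terms of the form $\int p_t\, \mSigma_{xh}\mK(s_t)\tp\, dt$ (and its transpose), and a quadratic term $\int\!\int p_t p_u\, \mK(s_t)\mSigma_{hh}\mK(s_u)\tp\, dt\, du$. Writing $\mB \defeq \mSigma_{xh}\mR^{-1/2}$ and using $(s+1)\mR + \mSigma_{hh} = \mR^{1/2}[(s+1)\mI + \mC]\mR^{1/2}$, one gets $\mK(s) = \mB[(s+1)\mI + \mC]^{-1}\mR^{-1/2}$. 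Substituting this into the expansion and exploiting separability of the double integral collapses the whole expression to $\mSigma_{xx} - \mB(2\mF - \mC\mF^2)\mB\tp$, where $\mF \defeq \int_{\cI} p_t[(s_t+1)\mI + \mC]^{-1}\, dt$ is a matrix function of $\mC$ and therefore commutes with $\mC$.

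Diagonalizing $\mC$ reduces the desired matrix equation $2\mF - \mC\mF^2 = (\mI + \mC)^{-1}$ to the scalar identity $2F(\lambda) - \lambda F(\lambda)^2 = (1+\lambda)^{-1}$ for every $\lambda \in \lambda(\mC)$, where $F(\lambda) \defeq \int_{\cI} p_t(s_t+1+\lambda)^{-1}\, dt$. To evaluate $F(\lambda)$ I would apply partial fractions to write $p_t(s_t+1+\lambda)^{-1} = \lambda^{-1} r_t[(s_t+1)^{-1} - (s_t+1+\lambda)^{-1}]$ for $\lambda \neq 0$, then integrate using the hypothesis at $c = 0$ and $c = \lambda$ to obtain $F(\lambda) = \lambda^{-1}[1 - (1+\lambda)^{-1/2}]$. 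A short algebraic check then confirms $2F(\lambda) - \lambda F(\lambda)^2 = (1+\lambda)^{-1}$, from which $\cov[\rvZ_a] = \mSigma_{xx} - \mB(\mI + \mC)^{-1}\mB\tp = \mSigma_a$ falls out. The case $\lambda = 0 \in \lambda(\mC)$, if it occurs, is handled by continuity in $\lambda$, since both sides of the scalar identity extend analytically through a neighborhood of zero by the open-set hypothesis on $c$.

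The main point requiring technical care is the collapse of the quadratic cross term into $\mB\mC\mF^2\mB\tp$: this requires a Fubini-style interchange to push expectation inside the integrals and then factor the double integral as a tensor product of two single integrals, which is justified by the continuity and nonnegativity of $r_t$ and $s_t$ together with the boundedness of $\mK(s)$ as a function of $s$ (and any necessary integrability assumptions on $r_t$ inherited from the hypothesis). I would expect this exchange, and the bookkeeping around the commutativity of $\mF$ with $\mC$, to be the only nonroutine step; everything else is an unwinding of definitions and a direct scalar computation.
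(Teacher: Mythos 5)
Your proof is correct, but it takes a genuinely different route from the paper's. The paper works in the joint state--observation space: it forms $\rvU_f = (\rvX_f,\, h(\rvX_f))$, regularizes the possibly singular joint covariance $\mGamma_f$ by adding $\varepsilon$-noise to the observable component, expresses the joint analysis covariance through the adjustment matrix $(\mGamma_{a,\varepsilon}\mGamma_{f,\varepsilon}^{-1})^{1/2} = \mF_\varepsilon^{-1/2}$ of \cref{section:adjustment_proof}, applies the integral hypothesis eigenvalue-wise to represent $\mF_\varepsilon^{-1/2}$, and then uses Sherman--Morrison--Woodbury twice before taking $\varepsilon \to 0$ and projecting onto state space. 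You instead expand $\cov[\rvZ_a]$ directly by bilinearity into $\mSigma_{xx}$, two mixed terms, and a double-integral quadratic term, whiten so that every factor becomes a function of $\mC$, and reduce the whole identity to the scalar statement $2F(\lambda) - \lambda F(\lambda)^2 = (1+\lambda)^{-1}$ with $F(\lambda) = \lambda^{-1}\bigl[1 - (1+\lambda)^{-1/2}\bigr]$; this checks out, since $(1-a)(1+a)/\lambda = (1 - (1+\lambda)^{-1})/\lambda = (1+\lambda)^{-1}$ with $a = (1+\lambda)^{-1/2}$, and your continuity argument at $\lambda = 0$ is sound given the open-set hypothesis (dominated convergence, using $p_t(s_t+1+\lambda)^{-1} \lesssim p_t$ and $\int p_t\,dt = 1$, makes $F$ continuous there). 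Your partial-fractions evaluation of $F(\lambda)$ is in fact identical to the computation the paper uses in \cref{section:modified_kalman_gain} to prove \cref{prop:integral_getkf_equivalence}, so your argument effectively merges \cref{theorem:quadrature_trick} and \cref{prop:integral_getkf_equivalence} into one self-contained calculation; what you gain is the elimination of the $\varepsilon$-regularization, the joint-space bookkeeping, and the Woodbury identities, at the cost of losing the structural byproducts the paper's route delivers, namely the standalone adjustment-matrix lemma and the interpretation of the integral-form update as the state-space projection of a genuine square-root adjustment in joint space.
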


\begin{proof}
See \cref{section:integral_proof}.
\end{proof}

Note that, as demonstrated in its proof, \cref{theorem:quadrature_trick} includes \cref{theorem:stochastic_inflation} as a special case corresponding to $\cI = (0,\infty)$, $r_t = (\pi\sqrt{t})^{-1}$ and $s_t = t$.

We end this section by remarking that the integral-form perturbation update described here is mathematically equivalent to the GETKF perturbation update that uses a modified Kalman gain matrix, as shown below in \cref{prop:integral_getkf_equivalence}.

\begin{proposition} \label{prop:integral_getkf_equivalence}
Under the assumptions of \cref{theorem:quadrature_trick},
\begin{equation*}
\int_{t \in \cI} p_t \mK(s_t)\, dt = \mG,
\end{equation*}
where $\mG$ is the modified Kalman gain matrix defined in \cref{eqn:modified_gain}.
\end{proposition}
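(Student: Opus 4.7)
The plan is to reduce the matrix identity to a scalar identity in the eigenvalues of $\mC$ via functional calculus, then evaluate the resulting scalar integral using partial fractions together with the integral identity assumed in Theorem \ref{theorem:quadrature_trick}. Throughout, I work with the fact that $\mC = \mR^{-1/2}\mSigma_{hh}\mR^{-1/2}$ is symmetric positive semidefinite.

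The first step is to normalize both sides by $\mR$. Writing $(s+1)\mR + \mSigma_{hh} = \mR^{1/2}((s+1)\mI + \mC)\mR^{1/2}$ gives
\begin{equation*}
\mK(s) = \mSigma_{xh}\mR^{-1/2}((s+1)\mI + \mC)^{-1}\mR^{-1/2}.
\end{equation*}
For the right-hand side $\mG$, observe that $(\mI + \mR^{-1}\mSigma_{hh})^{1/2} = \mR^{-1/2}(\mI + \mC)^{1/2}\mR^{1/2}$ (verified by squaring both sides, using the uniqueness of the square root guaranteed in the Notation subsection). Consequently,
\begin{equation*}
\mR + \mSigma_{hh} + \mR(\mI + \mR^{-1}\mSigma_{hh})^{1/2} = \mR^{1/2}\bigl[(\mI + \mC) + (\mI + \mC)^{1/2}\bigr]\mR^{1/2},
\end{equation*}
so $\mG = \mSigma_{xh}\mR^{-1/2}\,g(\mC)\,\mR^{-1/2}$ where $g$ acts on eigenvalues by $g(\lambda) = [(1+\lambda) + \sqrt{1+\lambda}\,]^{-1}$.

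Next I diagonalize $\mC = \mU\mLambda\mU^\top$ with $\mLambda = \operatorname{diag}(\lambda_i)$, $\lambda_i \geq 0$, and push the $t$-integral through the (finite-dimensional, continuous) functional calculus. Since $p_t = r_t/(s_t+1)$, the claim reduces to showing that for each eigenvalue $\lambda \in \lambda(\mC)$,
\begin{equation*}
\int_{\cI} \frac{r_t}{(s_t+1)(s_t+1+\lambda)}\, dt \;=\; \frac{1}{(1+\lambda) + \sqrt{1+\lambda}}.
\end{equation*}
For $\lambda > 0$, partial fractions give $\tfrac{1}{(s_t+1)(s_t+1+\lambda)} = \tfrac{1}{\lambda}\bigl[\tfrac{1}{s_t+1} - \tfrac{1}{s_t+1+\lambda}\bigr]$. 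Applying the hypothesis of Theorem \ref{theorem:quadrature_trick} twice (with $c = 0$ and with $c = \lambda$, both of which lie in the distinguished open set) evaluates the integral as $\tfrac{1}{\lambda}\bigl[1 - (1+\lambda)^{-1/2}\bigr]$. Using the factorization $\lambda = (\sqrt{1+\lambda} - 1)(\sqrt{1+\lambda} + 1)$ rearranges this into $[(1+\lambda) + \sqrt{1+\lambda}\,]^{-1}$, matching $g(\lambda)$. The case $\lambda = 0$ is handled either by a direct computation (both sides equal $1/2$) or, more cleanly, by continuity of both sides as functions of $\lambda$ on an open neighborhood of $\lambda(\mC) \cup \{0\}$ where the integral identity holds.

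The main obstacle is merely the bookkeeping: one must verify that the functional-calculus identity for $(\mI + \mR^{-1}\mSigma_{hh})^{1/2}$ is consistent with the paper's uniqueness convention for matrix square roots, and that the interchange of integration and the spectral decomposition is legitimate (immediate in finite dimensions since $\mSigma_{xh}\mR^{-1/2}\mU$ and $\mU^\top\mR^{-1/2}$ factor out of the integral). Once these are dispatched, everything reduces to the scalar partial-fractions identity above, and the result follows.
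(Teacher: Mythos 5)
Your proposal is correct and follows essentially the same route as the paper's proof: diagonalize $\mC = \mR^{-1/2}\mSigma_{hh}\mR^{-1/2}$, reduce to a scalar identity via partial fractions, apply the integral hypothesis at $c = 0$ and $c = \lambda_i$, and use the similarity identity $\mR^{1/2}(\mI + \mC)^{1/2}\mR^{1/2} = \mR(\mI + \mR^{-1}\mSigma_{hh})^{1/2}$ to recover the stated form of $\mG$. Your explicit treatment of the $\lambda = 0$ eigenvalue (where the partial-fraction division by $\lambda$ degenerates) is a small point of rigor that the paper's version leaves implicit.
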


\begin{proof}
See \cref{section:modified_kalman_gain}.
\end{proof}

Despite this equivalence, we will see in the remainder of this paper that the integral form of the perturbation update affords advantages in terms of accuracy and computational cost which are not shared by the traditional GETKF.

\subsection{Algorithm Details} \label{subsec:info_esrf_algorithm}

\indent

InFo-ESRF updates the ensemble mean using the standard transformation
\begin{equation}
\tildemu_a = \tildemu_f + \widehat{\mK}(\vy - \tildemu_h), \label{eqn:info_esrf_mean_solve}
\end{equation}
where $\widehat{\mK} = \widehat{\mSigma}_{xh}(\mR + \widehat{\mSigma}_{hh})^{-1}$ is the localized ensemble approximation of the Kalman gain matrix. We apply the inverse in $\widehat{\mK}$ using preconditioned conjugate gradients, or PCG \cite[Alg 11.5.1]{golub_van_loan}, with a preconditioner described in \cref{section:preconditioning}.

To update the ensemble perturbations, InFo-ESRF begins by discretizing the integral in \cref{eqn:integral_form_update} using a $Q$-point quadrature rule with nodes $s^{(1)},\, \ldots,\, s^{(Q)} \geq 0$ and weights $p^{(1)},\, \ldots,\, p^{(Q)} \geq 0$. Formulas for $(s^{(q)},\, p^{(q)})$ based on \cref{theorem:quadrature_trick} are given in \cref{section:quad_rule}. Perturbations are then updated as follows:
\begin{equation*}
\vz_a^{\,(i)} = \vz_f^{\,(i)} - \sum_{q = 1}^Q p^{(q)} \widehat{\mK}(s^{(q)})\vw_f^{\,(i)} = \vz_f^{\,(i)} - \widehat{\mSigma}_{xh} \sum_{q = 1}^Q p^{(q)} \vv_q^{\,(i)},
\end{equation*}
where $\widehat{\mK}(s) = \widehat{\mSigma}_{xh}((s+1)\mR + \widehat{\mSigma}_{hh})^{-1}$, and $\vv_q^{\,(i)}$ ($1 \leq i \leq m,\, 1 \leq q \leq Q$) are found by solving
\begin{equation}
((s^{(q)} + 1)\mR + \widehat{\mSigma}_{hh})\vv_q^{\,(i)} = \vw_f^{\,(i)}.
\label{eqn:info_esrf_perturbation_solve}
\end{equation}
These systems are again solved with PCG, using the same preconditioner as for the mean (cf.\ \cref{section:preconditioning}). Note that these linear solves can, and should, be parallelized across $i$ and $q$. Two other potential methods of solving \eqref{eqn:info_esrf_perturbation_solve} deserve mention, although we do not implement them in this paper.
\begin{enumerate}
\item If $h$ is composed with a whitening transform so that $\mR = \mI$, then for each $i$, one may compute $\vv_1^{\,(i)},\, \ldots,\, \vv_Q^{\,(i)}$ in parallel with a \emph{single} CG process using the shift-invariance property of Krylov subspaces \cite{datta_saad_shifted_solves, frommer_glassner_shifted_solves}. While this reduces computational complexity as a function of $Q$, it also requires the user to forgo preconditioning, or to restrict themselves to a specialized class of preconditioners that preserve Krylov subspaces under scalar shifts \cite{meerbergen_shifted_solves}.
\item Block CG iteration \cite{oleary_block_cg} can be used to compute $\vv_q^{\,(1)},\, \ldots,\, \vv_q^{\,(m)}$ in parallel for each $q$, such that fewer iterations per ensemble member are required to meet a given error tolerance, by approximating $\vv_q^{\,(i)}$ within the \emph{sum} of the Krylov subspaces generated for $\vv_q^{\,(1)},\, \ldots,\, \vv_q^{\,(m)}$. While this enriches the space of possible approximations for each ensemble member, it also requires more floating-point operations per iteration. If the ensemble perturbations are updated on parallel processors, then the block CG technique also requires synchronization and data-sharing between processes, whereas standard CG admits a parallel implementation with no communication needed.
\end{enumerate}

The remainder of this paper experimentally compares InFo-ESRF with existing localized filters.

    \section{Experiment 1: Synthetic Forecast Covariance} \label{section:single_cycle_da}

\indent

Our first experiment evaluates the accuracy of the analysis ensemble covariance produced by InFo-ESRF in comparison to existing localized square-root filters. We also evaluate filters in terms of wall-clock runtime. The following filters are compared with InFo-ESRF.
\begin{enumerate}
\item A localized serial ESRF \cite{shlyaeva_serial_filter}.
\item A GETKF with ensemble augmentation via modulation \cite{bishop_getkf}.
\item A GETKF with ensemble augmentation via a randomized SVD \cite{farchi_bocquet_localization}.
\item A GETKF which evaluates the modified Kalman gain matrix within a Krylov subspace \cite{steward_matrix_functions}.
\end{enumerate}
Each of these filters is reviewed in \ref{subsec:localized_filters}. For notes on how each filter was implemented, refer to \cref{section:filter_implementation}.

\subsection{Model and Observing System} \label{subsec:single_cycle_model}

\indent

The forecast distribution is modeled as an $n$-dimensional multivariate Gaussian $\cN(\vzero,\, \mSigma_{xx})$ with $n = 2000$. The forecast covariance is given by
\begin{equation}
\mSigma_{xx}(i,j) = \delta_{ij} \eta + \exp\left( -\frac{c(i,\,j)^2}{2\nu^2} \right),\quad c(i,j) = \frac{n}{\pi}\sin\left( \frac{\pi|i - j|}{n} \right), \label{eqn:single_cycle_covar}
\end{equation}
where $\delta$ is the Kronecker delta function, $\nu = 10$ is the correlation lengthscale, and $\eta = 10^{-4}$ is a noise floor added to make $\mSigma_{xx}$ numerically full-rank. Note that $c(x,x')$ represents chordal distance between points on a circle of circumference $n$. \Cref{fig:single_cycle_covar} plots the leading part of the eigenvalue spectrum for this covariance matrix.

We use a linear forward operator $\mH$ that observes $d = 100$ weighted sums over the state vector, and these sums are defined by the matrix entries
\begin{equation}
\mH(i,j) = \exp\left(-\frac{c(j,\,20i)^2}{2b^2}\right),\quad 1 \leq i \leq d,\quad 1 \leq j \leq n, \label{eqn:single_cycle_obsop}
\end{equation}
with channel bandwidth $b = 10$. For the observation error covariance we take
\begin{equation}
\mR = r^2\mI_d, \label{eqn:single_cycle_obserr_covar}
\end{equation}
where $r^2 \approx 36.3$ is 10\% of the underlying observable variance. \Cref{fig:single_cycle_covar} shows changes in variance from forecast to analysis under this observing system.
\begin{figure}
    \centering
    \includegraphics[scale=.8]{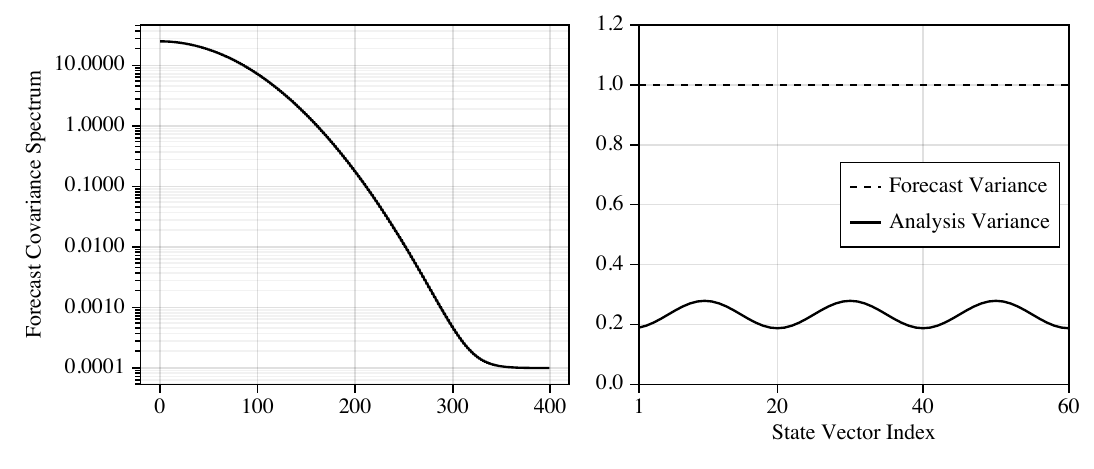}
    \caption{Left: leading entries of the eigenvalue spectrum for $\mSigma_{xx}$ defined in \cref{eqn:single_cycle_covar}. Right: forecast and analysis variances for the observing system given by \cref{eqn:single_cycle_obsop,eqn:single_cycle_obserr_covar}, over a subset of the state vector. The local minima in analysis variance correspond to peaks in the weighting functions that define each observation channel.} \label{fig:single_cycle_covar}
\end{figure}

\subsection{Localization}

\indent

To reduce spurious correlations between state variables $\rvX_i$ and $\rvX_j$, we use the localizing value
\begin{equation*}
\mL(i,j) = \exp\left( -\frac{c(i,j)^2}{2L^2} \right),
\end{equation*}
where $L = 12$. This value of $L$ minimizes the average error in the variance of an analysis ensemble produced by the GETKF with $m = 20$ ensemble members, with error quantified by \cref{eqn:diagonal_error_metric}. Average errors for tuning $L$ are measured over 20 trials. To isolate the effects of $L$ from filter-specific approximations, we tune $L$ using an idealized GETKF that forms the exact Kalman and modified Kalman gain matrices through ``brute force'' dense matrix computations.

\subsection{Experiment}

\indent

We measure filter performance over variations in parameters $p$ and $k$. The first parameter, $p$, is the number of Ritz vectors used to form the preconditioner used to update perturbations in InFo-ESRF; see \cref{section:preconditioning} for details. The second parameter, $k$, controls a trade-off between accuracy and computational cost. For the two augmentation methods \cite{bishop_getkf, farchi_bocquet_localization}, $k = \hat{m}/m$, where $\hat{m}$ is the augmented ensemble size and $m = 20$ is the original ensemble size. InFo-ESRF does not use an augmented ensemble, but does apply inflated Kalman gains to $\hat{m} \defeq Qm$ perturbation vectors, where $Q$ is the number of quadrature nodes. Therefore, for InFo-ESRF, we take $k = \hat{m}/m$ to be the quadrature size. For the serial ESRF \cite{shlyaeva_serial_filter} and the Krylov-based GETKF, we report constant averages that do not depend on $p$ or $k$.

To highlight the effects of preconditioning in InFo-ESRF, all Krylov methods (in InFo-ESRF and the Krylov-based GETKF \cite{steward_matrix_functions}) are run for only 2 iterations. We have found both filters to be remarkably effective even with such a small number of iterations. A higher number will be used for the cycled data assimilation experiment in \cref{section:cycled_da}.

Each filter assimilates data into a forecast ensemble consisting of $m = 20$ independent draws from $\cN(\vzero,\, \mSigma_{xx})$, where the ``ground-truth'' state is a $21^\mathrm{st}$ independent draw from $\cN(\vzero,\, \mSigma_{xx})$. This is repeated for 100 independent trials. For each trial we record the wall-clock runtime, as well as a measure of error in the analysis variance:
\begin{equation}
E^2 = \frac{1}{n}\sum_{i = 1}^n \frac{(\widetilde{\mSigma}_a(i,i) - \mSigma_a(i,i))^2}{\mSigma_a(i,i)^2}, \label{eqn:diagonal_error_metric}
\end{equation}
where $\widetilde{\mSigma}_a$ is the empirical covariance of the analysis ensemble, and $\mSigma_a$ is the analysis covariance derived analytically from the Kalman filter equations. For this experiment, we do not apply any inflation. Runtimes were measured using Julia 1.10.2 with OpenBLAS on a MacBook Air with an M2 chip and 8GB memory.

\subsection{Results}

\begin{figure}
    \centering
    \includegraphics[scale=.9]{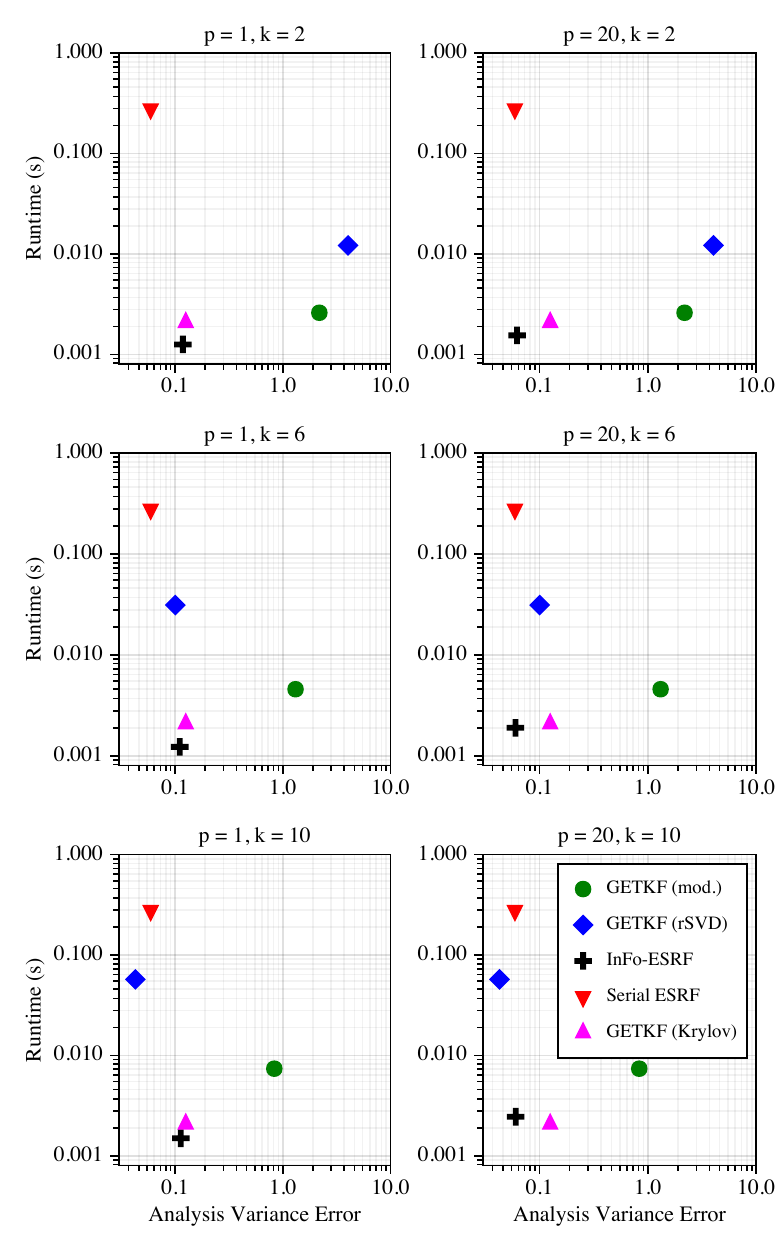}
    \caption{Results of the synthetic forecast experiment. Horizontal error bars for 95\% confidence are too narrow to visually resolve in this experiment, and are therefore not shown. Vertical error bars are similarly narrow for all filters except the serial ESRF and GETKF with randomized SVD-based modulation, for which they stretch into the negative.} \label{fig:single_cycle}
\end{figure}

\indent

The results of this experiment are shown in \cref{fig:single_cycle}. When substantial preconditioning is used ($p = 20$), InFo-ESRF exceeds all other filters in accuracy, except for the GETKF with randomized SVD-based augmentation for $k = 10$. Notably, InFo-ESRF is more accurate in this test case than the Krylov-based GETKF, which is the only other filter in this suite to use a Krylov subspace method. This illustrates the role of preconditioning in making InFo-ESRF an effective filter even for very few iterations of its Krylov method, whereas the Krylov-based GETKF uses a perturbation update that is not compatible with preconditioning (cf.\ \cref{section:algorithm}). The GETKF with modulation-based augmentation has large errors, owing to the difficulty of approximating the relatively high-rank localizing matrix $\mL$ with a small number of principal components.

The GETKF with modulation via a randomized SVD also has strikingly large errors for $k = 2$, where it approximates $\mSigma_{xx}$ at rank $km - 1 = 79$. This reflects the fact that $\mSigma_{xx}$ has a rather slowly decaying eigenvalue spectrum (cf.\ \cref{fig:single_cycle_covar}), which imposes a fundamental limit on how well it can be approximated with a small augmented ensemble, even when that ensemble is computed optimally. Interestingly, modulation produces a slightly more accurate approximation for $k = 2$, but not for higher values of $k$.

Across all test-cases, InFo-ESRF has the shortest runtime, which can be explained by several factors. It does not need to perform matrix operations over a large augmented ensemble, as the augmented-based GETKFs do. The GETKF based on a randomized SVD is particularly slow, owing mainly to the thin-QR factorization step within the random SVD \cite{halko_finding_structure_with_randomness}. InFo-ESRF also does not need to compute an eigendecomposition for each ensemble member, which is required in the Krylov-based GETKF. Finally, the serial ESRF is significantly slower than all other filters in this experiment, as its serial nature means that it must loop individually over each component of the data vector, whereas the other filters considered here manipulate the data in a single block.
    \section{Experiment 2: Cycled Data Assimilation} \label{section:cycled_da}

\indent

Our second experiment compares filter performance in terms of forecast accuracy and ensemble variance over a large number of forecast-assimilation cycles, in a manner that is conceptually similar to an operational weather prediction cycle.

\subsection{Model and Observing System}

\indent

The system being forecasted is an idealized 2D ``atmosphere'' which we have adopted from Farchi and Bocquet \cite{farchi_bocquet_localization}. Letting $X_{i,\, j}$ denote the state at column $i$ and layer $j$, the state evolves according to a coupled variant of the Lorenz '96 system \cite{lorenz_96}:
\begin{equation}
\begin{split}
\frac{dX_{i,\, j}}{dt} &= X_{i - 1,\, j}(X_{i + 1,\, j} - X_{i - 2,\, j}) - X_{i,\, j} + F_i \\
&\quad+ \delta_{j > 1} \gamma(X_{i,\, j - 1} - X_{i,\, j}) + \delta_{j < N_z}\gamma(X_{i,\, j + 1} - X_{i,\, j}),
\end{split} \label{eqn:ll96_ode}
\end{equation}
where $\gamma$ is a coupling term between adjacent horizontal layers, $F_i$ are forcing terms, and the horizontal index $i$ is periodic. We take $1 \leq i \leq N_h = 40$ and $1 \leq j \leq N_z = 32$, which gives a total model dimension of $n = 1280$. Following \cite{farchi_bocquet_localization}, we set $\gamma = 1$, and the forcing terms $F_i$ decrease linearly from $8$ at the bottom layer to $4$ at the top layer. \Cref{fig:system} plots a typical state vector for this system and illustrates the time evolution of a single vertical column.
\begin{figure}
    \centering
    \includegraphics[scale=.9]{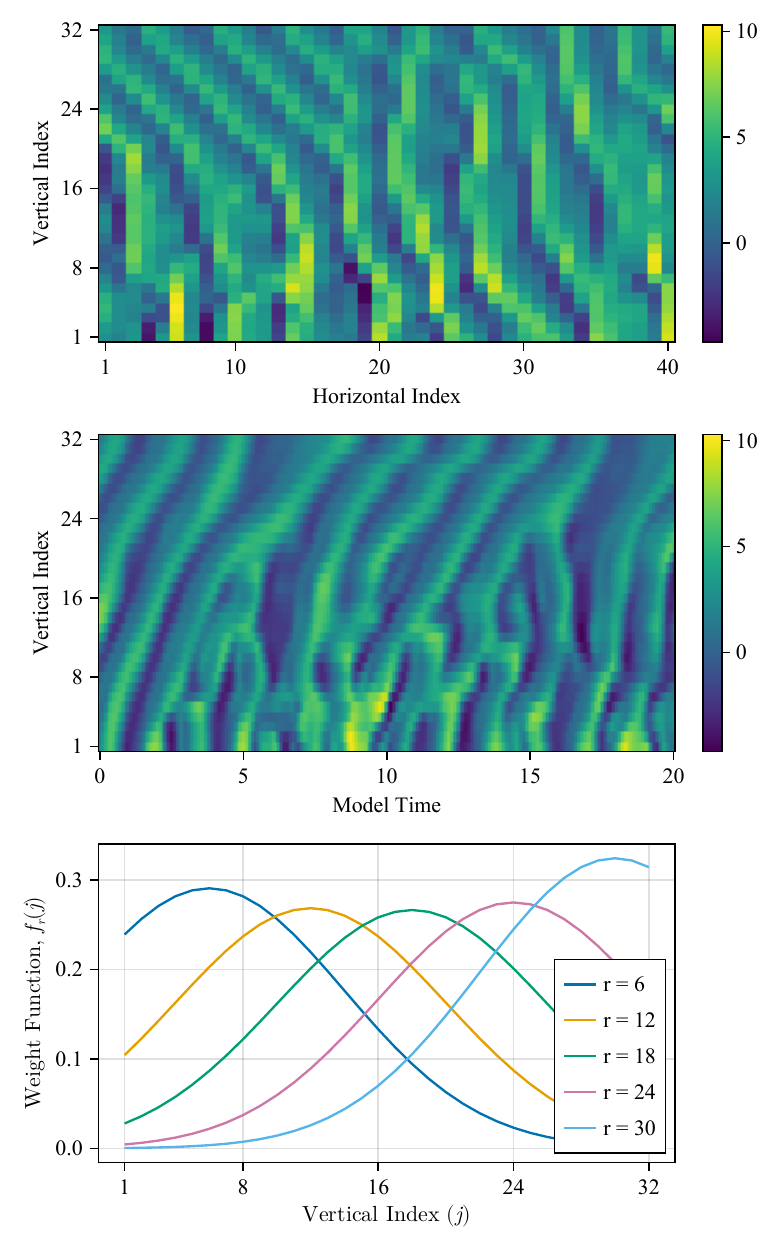}
    \caption{Top: a typical state vector for the model ``atmosphere.'' Middle: a Hovm\"oller diagram showing the time evolution of a single column. Bottom: weighting functions for the 5 channels observed for a each measured column.} \label{fig:system}
\end{figure}

We use a linear forward operator that observes $8$ evenly spaced vertical columns in a manner similar to \cite{farchi_bocquet_localization}. For each observed column, $5$ weighted vertical sums are measured, amounting to a data dimension of $d = 40$. For an observed column index $i$, these sums are given by
\begin{equation*}
I_{i,\, r} = \sum_{j = 1}^{N_z} X_{i,\, j} f_r(j),\quad f_r(j) = W_r^{-1} \exp\left( -\frac{(j - 6r)^2}{2b^2} \right),\,\quad 1 \leq r \leq 5,
\end{equation*}
with $b = 8$ and with $W_r$ chosen such that $\sum_j f_r(j)^2 = 1$. The weights $f_r(j)$ are plotted in \cref{fig:system}. Observations are perturbed with Gaussian noise vectors whose components are independent and zero-mean with variance $0.25$. This corresponds to approximately $1\%$ of the climatological variance for each channel. These observations are intended to mimic satellite radiance measurements, especially in the sense that they are vertically nonlocal.

\subsection{Localization}

\indent

To attenuate spurious correlations between state-vector indices $i$ and $j$, corresponding to 2D coordinates $(x_i,\, y_i)$ and $(x_j,\, y_j)$ with $1 \leq x_i,\, x_j \leq 40$ and $1 \leq y_i,\, y_j \leq 32$, we use the localizing value
\begin{equation*}
\mL(i,\, j) = \cG_2\left( \frac{1}{L} \sqrt{ c(x_i,\, x_j)^2 + (y_i - y_j)^2 } \right),
\end{equation*}
where $\cG_2$ is the univariate Gaspari-Cohn function \cite{gaspari_cohn} supported on $[-2,\, 2]$, $c(x,\, x')$ is chordal distance on a circle of circumference $N_h$ (cf.\ \cref{subsec:single_cycle_model}), and $L = 3$ is the localization length scale. This value for $L$ is selected by running the cycled DA experiment described below for several different values of $\ell$, and selecting the one which yielded the lowest average forecast error over the last $4000$ cycles. As in the first experiment, forecast errors for tuning $L$ are measured with respect to an idealized GETKF that forms the exact standard and modified Kalman gain matrices through ``brute force'' linear algebra.

\subsection{Experiment} \label{subsec:cycled_da_experiment}

\indent

To create an initial ground-truth state and forecast ensemble, we generate $41$ independent Gaussian random fields, and we integrate each field through \cref{eqn:ll96_ode} on the interval $0 \leq t \leq 10$. The first $m = 40$ results constitute the initial ensemble, and the final result is the initial ground-truth state. The same initial ensemble is used for all five filters. Each filter assimilates data from the ground-truth state into its ensemble every $\Delta t = 0.05$ units of model time for a total of $5000$ cycles. At each cycle, Krylov methods within the Krylov-based GETKF and InFo-ESRF are run for $10$ iterations, and all filters are run with posterior ensemble inflation in the form of relaxation to prior spread \cite{whitaker_rtps} with relaxation parameter $0.01$.

As in the first experiment, the augmentation methods and InFo-ESRF are evaluated over variations in a parameter $k$. For the augmentation methods, $k$ is the ratio of the augmented ensemble size over the original ensemble size, and for InFo-ESRF, $k$ is the number of quadrature nodes. We use the preconditioner described in \cref{section:preconditioning} for the mean-update stage of the Krylov-based GETKF, and for both the mean and perturbation updates in InFo-ESRF. The number of Ritz vectors for constructing the preconditioner is fixed at $p = 10$.

After $1000$ burn-in cycles, forecast error and forecast ensemble variance are recorded over the remaining cycles. We report two overall measures of filter performance:
\begin{align*}
\text{Average Forecast MSE} &= \frac{1}{4000} \sum_{T = 1001}^{5000} \frac{\| \tildemu_T - \vx_T \|_2^2}{ n }, \label{eqn:cycled_da_mse} \\
\text{Average Forecast }\frac{\text{MSE}}{\text{Variance}} &= \frac{1}{4000} \sum_{t = 1001}^{5000} \frac{\| \tildemu_T - \vx_T \|_2^2}{\sigma_T^2},
\end{align*}
where $n = 1280$ is the model dimension, $\tildemu_T$ is the forecast mean at cycle $T$, $\sigma_T^2$ is the forecast variance at cycle $T$ averaged over all components of the state vector, and $\vx_T$ is the ground-truth state at cycle $T$. For a successful filter, we expect the MSE-over-variance to be near unity.
\begin{figure}
    \centering
    \includegraphics[scale=.8]{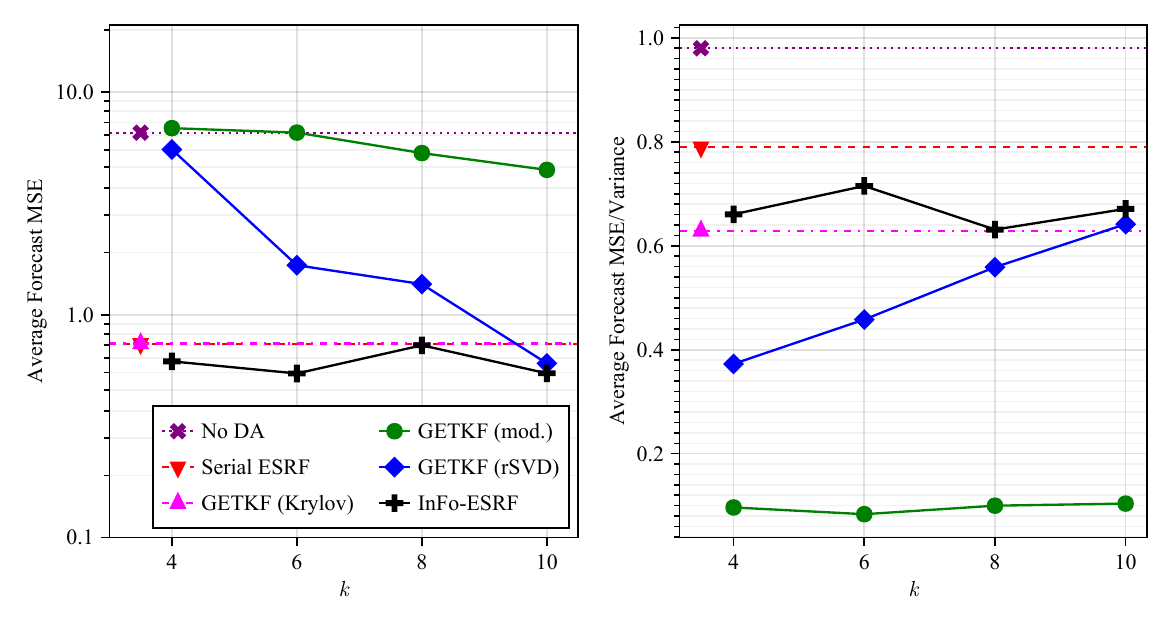}
    \caption{Results of the cycled data assimilation experiment. For the augmentation-based filters, $k$ is the ratio of the augmented ensemble size to the original ensemble size. For InFo-ESRF, $k$ is the number of quadrature nodes.}
    \label{fig:cycled_da}
\end{figure}

\subsection{Results}

\indent

The results of this experiment, repeated over 30 independent trials, are shown in \cref{fig:cycled_da}. For comparison, \cref{fig:cycled_da} also shows forecast errors and variance for an ensemble that does not undergo any data assimilation. For a given ensemble size, GETKF with a randomized SVD generally produces more accurate forecasts than GETKF with ensemble modulation, and has a better ratio of forecast MSE to variance. This is to be expected since, as explained in \ref{subsec:localized_filters}, the randomized SVD method produces a more optimal approximation of the localized covariance for a given number of augmented ensemble members. The Krylov-based GETKF performs better than either augmentation technique in this experiment, owing to the fact that its Krylov method directly approximates the action of the modified Kalman gain (rather than approximating the covariance ``upstream'' of it). InFo-ESRF performs nearly identically to the Krylov-based GETKF, and significantly, this occurs more-or-less independently of the quadrature size. This indicates that the use of quadrature to approximate the Kalman perturbation update not introduce significant error into InFo-ESRF, even for a very small number of quadrature nodes.
    \section{Conclusions}

\indent

We have demonstrated an integral-form ensemble square-root filter (InFo-ESRF) that is able to efficiently update ensemble perturbations in the presence of model space covariance localization. Our filter uses a discretized integral representation of the Kalman filter equations to update perturbations without directly evaluating a matrix square-root. This update procedure is easily parallelizable and uses Krylov methods in a manner that allows for preconditioning. Through numerical experiments with a synthetic forecast distribution and a multi-layer Lorenz system, we have demonstrated that InFo-ESRF has advantages in both accuracy and computational cost when compared to existing localized filters.

\section{Acknowledgments}

\indent

This paper was greatly improved by suggestions and insights from many of our colleagues. Jeff Whitaker (NOAA Physical Sciences Laboratory) helped us properly contextualize this work by making us aware of Steward et al.'s paper on Krylov-based ensemble Kalman filtering \cite{steward_matrix_functions}. Stephen Becker (University of Colorado Dept.\ of Applied Mathematics) and Anil Damle (Cornell University Dept.\ of Computer Science) both provided important references on quadrature methods for computing matrix square-roots. Alex Townsend (Cornell University Dept.\ of Mathematics) gave feedback on an initial presentation of this work which helped us improve our numerics. We are particularly indebted to Chris Snyder (NSF-NCAR Mesoscale and Microscale Meteorology), who hosted Robin Armstrong as a visitor at NSF-NCAR to begin developing an implementation of InFo-ESRF for full-scale weather forecasting problems.

Robin Armstrong was partially supported by the US Department of Energy Office of Science award DE-SC0025453. Ian Grooms was supported by the US National Science Foundation award 2152814.

    \bibliographystyle{siam}
    \bibliography{text/sources}
    
    \appendix

\section{Ensemble Adjustment Matrix} \label{section:adjustment_proof}

\indent

In the notation of \cref{section:background}, let $\mA = (\mSigma_a\mSigma_{xx}^{-1})^{1/2}$. Here we show that $\mA$ is a valid adjustment matrix, in the sense that it is well-defined  and $\mA\mSigma_{xx}\mA\tp = \mSigma_a$. For $\mA$ to be well-defined, it is necessary and sufficient that the eigenvalues of $\mSigma_a\mSigma_{xx}^{-1}$ are real and nonnegative. This is indeed the case, because
\begin{equation*}
\mSigma_{xx}^{-1/2}(\mSigma_a\mSigma_{xx}^{-1})\mSigma_{xx}^{1/2} = \mSigma_{xx}^{-1/2}\mSigma_a\mSigma_{xx}^{-1/2},
\end{equation*}
and the right-hand side is symmetric positive semidefinite. Taking the square-root on both sides, we find that
\begin{equation*}
\mSigma_{xx}^{-1/2} \mA\mSigma_{xx}^{1/2} = (\mSigma_{xx}^{-1/2}\mSigma_a\mSigma_{xx}^{-1/2})^{1/2},
\end{equation*}
and therefore,
\begin{align*}
\mA\mSigma_{xx}\mA\tp &= \mSigma_{xx}^{1/2} (\mSigma_{xx}^{-1/2}\mSigma_a\mSigma_{xx}^{-1/2})^{1/2} \mSigma_{xx}^{-1/2}\mSigma_{xx}\mSigma_{xx}^{-1/2} (\mSigma_{xx}^{-1/2}\mSigma_a\mSigma_{xx}^{-1/2})^{1/2} \mSigma_{xx}^{1/2} \\
&= \mSigma_{xx}^{1/2} \mSigma_{xx}^{-1/2}\mSigma_a\mSigma_{xx}^{-1/2} \mSigma_{xx}^{1/2} \\
&= \mSigma_a,
\end{align*}
meaning that $\mA$ provides the correct transformation.

\section{Derivation of the Integral Form Perturbation Update} \label{section:integral_proof}

\indent

In this section we prove \cref{theorem:stochastic_inflation,theorem:quadrature_trick}. Let $\vmu_a,\, \mSigma_a$ be the Kalman filter analysis given by \cref{eqn:kalman_filter}, and define
\begin{equation*}
\rvU_f = \begin{bmatrix}
\rvX_f \\
h(\rvX_f)
\end{bmatrix},\quad \mPi_x = \begin{bmatrix} \mI_{n \times n} & \mZero_{n \times d} \end{bmatrix}, \quad\text{and}\quad \mPi_h = \begin{bmatrix} \mZero_{d \times n} & \mI_{d \times d} \end{bmatrix},
\end{equation*}
so that $\rvX_f = \mPi_x\rvU_f$ and $\rvY = \mPi_h\rvU_f + \rvE_0$, recalling that $\rvE_0 \sim \cN(\vzero,\, \mR)$ is observation noise independent of $\rvX_f$. Let
\begin{equation*}
\mGamma_f \defeq \cov[\rvU_f] = \begin{bmatrix}
\mSigma_{xx} & \mSigma_{xh} \\
\mSigma_{xh}\tp & \mSigma_{hh} \\
\end{bmatrix}.
\end{equation*}
We will derive an integral formula for the Kalman filter analysis on $\rvU_f$ and $\rvY$, taking advantage of linearity in the observation model that relates the two, and projecting onto the state-space component will complete the proof. However, we face the difficulty that $\mGamma_f$ may be singular; this is the case when, for example, $h$ is affine. Thus, given $\varepsilon > 0$ we define $\rvU_{f,\, \varepsilon} = (\rvX_f,\, h(\rvX_f) + \varepsilon \rvE_1)$, where $\rvE_1 \sim \cN(\vzero,\, \mI)$ is independent of $\rvX_f$ and $\rvE_0$. Let $\mSigma_{hh,\, \varepsilon} = \mSigma_{hh} + \varepsilon\mI$, and note that
\begin{equation*}
\mGamma_{f,\, \varepsilon} \defeq \cov[\rvU_{f,\, \varepsilon}] = \begin{bmatrix}
\mSigma_{xx} & \mSigma_{xh} \\
\mSigma_{xh}\tp & \mSigma_{hh,\, \varepsilon} \\
\end{bmatrix}\,.
\end{equation*}
Because $\mGamma_f$ is positive semidefinite, a Schur complement identity \cite[eq 0.8.5.3]{horn_johnson} implies that $\mSigma_{hh} - \mSigma_{xh}\tp\mSigma_{xx}^{-1}\mSigma_{xh} \succeq \mZero$ in the positive semidefinite ordering. We therefore have $\mSigma_{hh,\, \varepsilon} - \mSigma_{xh}\tp\mSigma_{xx}^{-1} \mSigma_{xh} \succeq \varepsilon \mI \succ \mZero$, and by reapplying the same Schur complement identity, we find that $\mGamma_{f,\, \varepsilon}$ is strictly positive definite.

Let $\mGamma_a$, resp.\ $\mGamma_{a,\, \varepsilon}$, be the Kalman filter analysis covariance matrices for $\rvU_f$ and $\rvY$, resp.\ $\rvU_{f,\, \varepsilon}$ and $\rvY$. Referencing \cref{eqn:kalman_filter} and using $\cov[\rvU_f,\, h(\rvX_f)] = \mGamma_f\mPi_h\tp$ and $\cov[\rvU_{f,\, \varepsilon}$, $h(\rvX_f)] = \mGamma_{f,\, \varepsilon}\mPi_h\tp$, we have
\begin{align*}
\mGamma_a &= \mGamma_f - \mGamma_f\mPi\tp(\mR + \mSigma_{hh})^{-1}\mPi_h\mGamma_f = \lim_{\varepsilon \to 0} \mGamma_{f,\, \varepsilon} - \mGamma_{f,\, \varepsilon}\mPi_h\tp(\mR + \mSigma_{hh,\, \varepsilon})^{-1}\mPi_h\mGamma_{f,\, \varepsilon} \\
&= \lim_{\varepsilon \to 0} \mGamma_{a,\, \varepsilon}.
\end{align*}
Now we will relate $\mGamma_{f,\, \varepsilon}$ to $\mGamma_{a,\, \varepsilon}$ using the adjustment matrix derived in \cref{section:adjustment_proof}. This gives $\mGamma_{a,\, \varepsilon} = \mA_\varepsilon\mGamma_{f,\, \varepsilon}\mA_\varepsilon\tp$, where
\begin{align}
\mA_\varepsilon &= (\mGamma_{a,\, \varepsilon}\mGamma_{f,\, \varepsilon}^{-1})^{1/2} = (\mI - \mGamma_{f,\, \varepsilon}\mPi_h\tp(\mR + \mPi_h\mGamma_{f,\, \varepsilon}\mPi_h)^{-1}\mPi_h)^{1/2} \nonumber \\
&= (\mI + \mGamma_{f,\, \varepsilon}\mPi_h\tp\mR^{-1}\mPi_h)^{-1/2}, \label{eqn:epsilon_adjustment}
\end{align}
having used $\mSigma_{hh,\, \varepsilon} = \mPi_h\mGamma_{f,\, \varepsilon}\mPi_h\tp$ and the Sherman Morrison Woodbury identity.

Recall the definition $\mC = \mR^{-1/2}\mSigma_{hh}\mR^{-1/2}$, and let
\begin{equation*}
\mC_\varepsilon = \mR^{-1/2}\mSigma_{hh,\, \varepsilon}\mR^{-1/2} = \mC + \varepsilon\mR^{-1}.
\end{equation*}
In addition, recall that $r_t,\, s_t \geq 0$ are continuous functions of $t \in \cI$ such that
\begin{equation}
\int_{t \in \cI} \frac{r_t}{s_t + c + 1}\, dt = \frac{1}{\sqrt{c + 1}} \label{eqn:generic_invsqrt_integral}
\end{equation}
for all $c$ in some open set containing $\lambda(\mC) \union \{ 0 \}$. Because $\lambda(\mC_\varepsilon) \subseteq \lambda(\mC) \pm \varepsilon \| \mR^{-1} \|_2$, \cref{eqn:generic_invsqrt_integral} also holds for all $c \in \lambda(\mC_\varepsilon)$ when $\varepsilon$ is small enough. Finally, let
\begin{equation*}
\mF_\varepsilon = \mI + \mGamma_{f,\, \varepsilon}\mPi_h\tp\mR^{-1}\mPi_h,\quad \mG_\varepsilon = \mR^{-1/2}\mPi_h\mGamma_{f,\, \varepsilon}^{1/2}.
\end{equation*}
\Cref{eqn:epsilon_adjustment} gives $\mA_\varepsilon = \mF_\varepsilon^{-1/2}$, and by the diagonalizability of $\mG_\varepsilon\tp\mG_\varepsilon$,
\begin{equation}
\mF_\varepsilon = \mGamma_{f,\, \varepsilon}^{-1/2}(\mI + \mG_\varepsilon\tp\mG_\varepsilon)\mGamma_{f,\, \varepsilon}^{1/2} \label{eqn:adjustment_symmetrization}
\end{equation}
is also diagonalizable. Let $\lambda_i(\cdot)$ denote the $i\nth$ largest eigenvalue of a positive-definite matrix, with $\lambda_i(\cdot) \defeq 0$ when $i$ exceeds the matrix dimension. For all $i \geq 1$, \cref{eqn:adjustment_symmetrization} implies that
\begin{align*}
\lambda_i(\mF_\varepsilon) = 1 + \lambda_i(\mG_\varepsilon\tp\mG_\varepsilon) = 1 + \lambda_i(\mG_\varepsilon\mG_\varepsilon\tp) = 1 + \lambda_i(\mC_\varepsilon).
\end{align*}
We now see that the matrices $r_t(s_t\mI + \mF_\varepsilon)^{-1},\, t \in \cI$, are simultaneously diagonalizable with eigenvalues $r_t (s_t + \lambda_i(\mC_\varepsilon) + 1)^{-1}$. Because of this, we may apply \cref{eqn:generic_invsqrt_integral} eigenvalue-wise to $\mF_\varepsilon$, yielding
\begin{equation*}
\mA_\varepsilon = \mF_\varepsilon^{-1/2} = \int_{t \in \cI} r_t(s_t\mI + \mF_\varepsilon)^{-1}\, dt = \int_{t \in \cI} r_t((s_t + 1)\mI + \mGamma_{f,\, \varepsilon}\mPi_h\tp\mR^{-1}\mPi_h)^{-1}\, dt.
\end{equation*}
Using the Sherman-Morrison-Woodbury identity,
\begin{align*}
\mA_\varepsilon &= \int_{t \in \cI} r_t \left( \frac{1}{s_t + 1}\mI - \frac{1}{(s_t + 1)^2}\mGamma_{f,\, \varepsilon}\mPi_h\tp(\mR + (s_t + 1)^{-1}\mPi_h\mGamma_{f,\, \varepsilon}\mPi_h\tp)^{-1}\mH \right)\, dt \\
&= \int_{t \in \cI} p_t\mI\, dt -  \int_{t \in \cI} p_t \mGamma_{f,\, \varepsilon}\mPi_h\tp((s_t + 1)\mR + \mSigma_{hh,\, \varepsilon})^{-1}\mPi_h \, dt,
\end{align*}
where $p_t = r_t(s_t + 1)^{-1}$. Substituting $c = 0$ into \cref{eqn:generic_invsqrt_integral} shows that $\int_{t \in \cI} p_t\, dt = 1$, and correspondingly $\int_{t \in \cI} p_t \mI\, dt = \mI$.

We will now obtain an integral formula for $\mSigma_a$. Using $\mGamma_f = \lim_{\varepsilon \to 0} \mGamma_{f,\, \varepsilon}$ and $\mGamma_a = \lim_{\varepsilon \to 0} \mGamma_{a,\, \varepsilon}$,
\begin{equation*}
\mSigma_a = \mPi_x\mGamma_a\mPi_x\tp = \mPi_x\left( \lim_{\varepsilon \to 0} \mA_\varepsilon\mGamma_{f,\, \varepsilon}\mA_\varepsilon\tp \right)\mPi_x\tp = \mA_x\mGamma_f\mA_x\tp,
\end{equation*}
where
\begin{align}\notag
\mA_x &= \lim_{\varepsilon \to 0} \mPi_x\mA_\varepsilon = \mPi_x -  \int_{t \in \cI} p_t \mPi_x\mGamma_f\mPi_h\tp((s_t + 1)\mR + \mSigma_{hh})^{-1}\mPi_h \, dt \\
&= \mPi_x - \left( \int_{t \in \cI} p_t\mK(s_t)\, dt \right) \mPi_h \label{eqn:adjustment_integral_form},
\end{align}
having used $\mPi_x\mGamma_f\mPi_h\tp = \mSigma_{xh}$ and \cref{eqn:inflated_kalman_gain}. Because $\cov[\rvU_h] = \mGamma_f$, defining $\rvZ_a = \mA_x(\rvU_f - \vmu)$ implies $\E[\rvZ_a] = \vzero$ and $\cov[\rvZ_a] = \mSigma_a$. Inserting \cref{eqn:adjustment_integral_form} for $\mA_x$ and using $\mPi_x (\rvU_f - \vmu) = \rvZ_f,\, \mPi_h(\rvU_f - \vmu) = \rvW_f$, we have
\begin{equation*}
\rvZ_a = \rvZ_f - \left(\int_{t \in \cI} p_t\mK(s_t)\, dt\right)\rvW_f,
\end{equation*}
which proves \cref{theorem:quadrature_trick}.

\Cref{theorem:stochastic_inflation} is a corollary of \cref{theorem:quadrature_trick} that arises from a specific choice of $\cI$, $r_t$, and $s_t$. Starting with $\pi / 2 = \int_0^\infty (1 + x^2)^{-1}\, dx$ and making the substitution $x = (c+1)^{-1/2} \tan\left( \frac{\pi s}{2} \right)$, where $c \in (-1,\, \infty)$ is arbitrary, we have
\begin{equation}
\frac{1}{\sqrt{c + 1}} = \int_0^1 \frac{\sec^2\left( \frac{\pi s}{2} \right)}{\tan^2\left( \frac{\pi s}{2}\right) + c + 1}\, ds. \label{eqn:trig_quadrature_rule}
\end{equation}
Now we let $s = 2\pi^{-1}\arctan(\sqrt{t})$, so that $\tan^2\left( \frac{\pi s}{2}\right) = t$ and $\sec^2\left( \frac{\pi s}{2}\right) = t + 1$. This gives
\begin{equation*}
\frac{1}{\sqrt{c + 1}} = \int_0^\infty \frac{1}{\pi \sqrt{t}(t + 1 + c)}\, dt.
\end{equation*}
The hypotheses of \cref{theorem:quadrature_trick} are therefore satisfied by
\begin{equation}
\cI = (0,\infty),\quad r_t = \frac{1}{\pi\sqrt{t}},\quad s_t = t,\quad p_t = \frac{r_t}{s_t + 1} = \frac{1}{\pi \sqrt{t}(t+1)}. \label{eqn:generic_quadrature}
\end{equation}
Inserting these expressions into \cref{theorem:quadrature_trick} yields \cref{theorem:stochastic_inflation}.

\section{Integral Form of the Modified Kalman Gain Matrix} \label{section:modified_kalman_gain}

\indent

Here we prove \cref{prop:integral_getkf_equivalence}, which states that if $\cI,\, r_t,\, s_t$, and $p_t$ satisfy the assumptions of \cref{theorem:quadrature_trick}, then
\begin{equation*}
\int_{t \in \cI} p_t \mK(s_t)\,dt = \mG,
\end{equation*}
where $\mG = \mSigma_{xh}(\mR + \mSigma_{hh} + \mR(\mI + \mR^{-1}\mSigma_{hh})^{1/2})^{-1}$ is the modified Kalman gain matrix from \cref{eqn:modified_gain}. We start by rewriting this integral using a spectral decomposition $\mR^{-1/2}\mSigma_{hh}\mR^{-1/2} = \mQ\mLambda\mQ\tp$. This gives
\begin{align*}
\int_{t \in \cI} p_t \mK(s_t)\, dt &= \mSigma_{xh}\mR^{-1/2}\mQ\left( \int_{t \in \cI} p_t ((s_t + 1)\mI + \mLambda)^{-1}\, dt \right)\mQ\tp\mR^{-1/2}.
\end{align*}
The matrix $\int_{t \in \cI} p_t ((s_t + 1)\mI + \mLambda)^{-1}\, dt$ is diagonal, and we use a partial fraction decomposition to express its $(i,i)$ entry:
\begin{align}
\int_{t \in \cI} \frac{p_t}{s_t + 1 + \lambda_i}\, dt &= \int_{t \in \cI} \frac{r_t}{(s_t + 1)(s_t + 1 + \lambda_i)}\,dt = \frac{1}{\lambda_i} \int_{t \in \cI} \left( \frac{r_t}{s_t + 1} - \frac{r_t}{s_t + 1 + \lambda_i} \right)\, dt.
\end{align}
The definition of $r_t$ and $s_t$ implies that
\begin{equation*}
\frac{1}{\lambda_i} \int_{t \in \cI} \left( \frac{r_t}{s_t + 1} - \frac{r_t}{s_t + 1 + \lambda_i} \right)\, dt = \frac{1}{\lambda_i}\left( 1 - \frac{1}{\sqrt{1 + \lambda_i}} \right) = \frac{1}{1 + \lambda_i + \sqrt{1 + \lambda_i}}.
\end{equation*}
We now have $\int_{t \in \cI} p_t ((s_t + 1)\mI + \mLambda)^{-1}\, dt = (\mI + \mLambda + (\mI + \mLambda)^{1/2})^{-1}$, and
\begin{align*}
\int_{t \in \cI} p_t \mK(s_t)\, dt &= \mSigma_{xh}\mR^{-1/2}\mQ(\mI + \mLambda + (\mI + \mLambda)^{1/2})^{-1} \mQ\tp\mR^{-1/2} \\
&= \mSigma_{xh}(\mR + \mSigma_{hh} + \mR^{1/2}(\mI + \mR^{-1/2}\mSigma_{hh}\mR^{-1/2})^{1/2}\mR^{1/2})^{-1}.
\end{align*}
Note that
\begin{align*}
\mR^{1/2}(\mI + \mR^{-1/2}\mSigma_{hh} \mR^{-1/2})^{1/2}\mR^{1/2} &= \mR\mR^{-1/2}(\mI + \mR^{-1/2}\mSigma_{hh}\mR^{-1/2})^{1/2}\mR^{1/2} \\
&= \mR\left[ \mR^{-1/2}(\mI + \mR^{-1/2}\mSigma_{hh}\mR^{-1/2})\mR^{1/2} \right]^{1/2} \\
&= \mR(\mI + \mR^{-1}\mSigma_{hh})^{1/2}.
\end{align*}
Thus $\int_{t \in \cI} p_t \mK(s_t)\, dt = \mSigma_{xh}(\mR + \mSigma_{hh} + \mR(\mI + \mR^{-1}\mSigma_{hh})^{1/2})^{-1} = \mG$, completing the proof.

\section{Quadrature Rules} \label{section:quad_rule}

\indent

To discretize the integral in InFo-ESRF, we use a quadrature rule for matrix square-roots devised by Hale, Higham, and Trefethen \cite{hale_higham_contour_trick}  using contour integration methods. Their work, rephrased in the notation of this paper, shows that \cref{eqn:generic_invsqrt_integral} holds with
\begin{align*}
\cI = (0,\, \cK(g)),\quad r_t = \pi^{-1} |\mathrm{cn}(it | g) \mathrm{dn}(it | g)|,\quad s_t = |\mathrm{sn}(it | g)|^2,
\end{align*}
where $i$ is the imaginary unit, $\cK$ is the complete elliptic integral of the first kind, $g = \sqrt{1 - (1 + \ell)^{-1}}$, $\ell$ is any value in $(\lambda_\mathrm{max}(\mC),\, \infty)$, and $\mathrm{sn},\, \mathrm{cn},\, \mathrm{dn}$ are Jacobi elliptic functions. Our implementation of InFo-ESRF uses a rough order-of-magnitude estimate for $\ell$ that is chosen beforehand and reused for each assimilation cycle. A trapezoid rule discretization yields
\begin{equation}
\begin{split}
s^{(q)} &= |\mathrm{sn}(it_q \cK(g) \,|\, g)|^2,\quad r^{(q)} = \frac{2}{\pi Q} |\cK(g)\mathrm{cn}(it_q \cK(g) \,|\, g)\mathrm{dn}(iu_q \cK(g) \,|\, g)|, \\
w^{(q)} &= r^{(q)}(s^{(q)} + 1)^{-1},
\end{split} \label{eqn:elliptic_quadrature_rule}
\end{equation}
for $q = 1,\, \ldots,\, Q$, where $Q$ is the quadrature size and $t_q = Q^{-1}(q - \frac{1}{2})$. We use this quadrature rule because of its extremely fast convergence rate. Indeed, to reach a desired error tolerance $\varepsilon > 0$, it requires only $Q = \cO((\log \varepsilon^{-1})(\log \ell))$ nodes \cite[Thm 4.1]{hale_higham_contour_trick}.

While our data assimilation experiments use the above quadrature rule, which we call the ``elliptical quadrature,'' other methods are possible. As an example, in light of \cref{eqn:trig_quadrature_rule}, we consider discretizing an integral form based on $\cI = (0,1),\, r_t = \sec^2\left( \frac{\pi t}{2} \right),\, s_t = \tan^2\left( \frac{\pi t}{2} \right),\, p_t \equiv 1$ using Gauss-Legendre quadrature. This yields
\begin{equation}
s^{(q)} = \tan^2\left( \frac{\pi \lambda_q}{2} \right),\quad p^{(q)} = \frac{v_q}{2} \label{eqn:gaussian_quadrature_rule}
\end{equation}
for $q = 1,\, \ldots,\, Q$, where $v_q,\, \lambda_q$ are Gauss-Legendre weights and nodes that can be computed with the Golub-Welsch algorithm \cite{golub_welsch}.

\Cref{fig:quad_rule_convergence} shows the convergence of these two quadrature rules for approximating a scalar modified Kalman gain using the integral representation from \cref{prop:integral_getkf_equivalence}:
\begin{equation*}
G = \frac{\sigma_{xh}}{1 + \sigma_{hh} + \sqrt{1 + \sigma_{hh}}} \approx \sum_{q = 1}^Q \frac{p^{(q)}\sigma_{xh}}{1 + s^{(q)} + \sigma_{hh}},
\end{equation*}
where $\sigma_{xh} = 20$ and $\sigma_{hh} = 10$. While the Gaussian quadrature rule is easier to implement using the standard functions of most scientific computing libraries, the elliptical quadrature rule converges much faster.

\begin{figure}
    \centering
    \includegraphics[scale=.85]{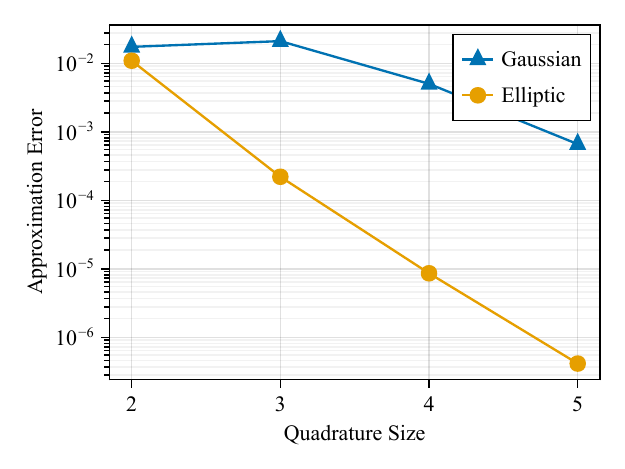}
    \caption{Convergence rates of the Gaussian quadrature rule, \eqref{eqn:gaussian_quadrature_rule}, and the elliptical quadrature rule, \eqref{eqn:elliptic_quadrature_rule}, for approximating a scalar modified Kalman gain.} \label{fig:quad_rule_convergence}
\end{figure}

\section{Linear System Solves} \label{section:preconditioning}

\indent

As explained in \cref{section:algorithm}, InFo-ESRF computes analysis perturbations by solving the linear systems
\begin{equation*}
((s^{(q)} + 1)\mR + \widehat{\mSigma}_{hh})\vv_q^{\,(i)} = \vw_f^{\,(i)}
\end{equation*}
for $1 \leq i \leq m$ and $1 \leq q \leq Q$. We compute $\vv_q^{\,(i)}$ by first solving
\begin{equation}
\widehat{\mC}_q \vu_q^{\,(i)} = \mR^{-1/2}\vw_f^{\,(i)} \label{eqn:normalized_linear_solve}
\end{equation}
where $\widehat{\mC}_q = (s^{(q)} + 1)\mI + \mR^{-1/2}\widehat{\mSigma}_{hh}\mR^{-1/2}$, and then setting $\vv_q^{\,(i)} = \mR^{-1/2}\vu_q^{\,(i)}$. \Cref{eqn:normalized_linear_solve} is solved with a preconditioned conjugate gradients iteration \cite[Alg 11.5.1]{golub_van_loan}. We use a limited-memory preconditioner \cite{tshimanga_preconditioning}, denoted $\mP_q$ and defined by
\begin{equation*}
\mP_q^{-1} = (\mI - \mPhi_q\mTheta_q^{-1}\mPhi_q\tp\widehat{\mC}_q)(\mI - \widehat{\mC}_q\mPhi_q\mTheta_q^{-1}\mPhi_q\tp) + \beta \mPhi_q\mTheta_q^{-1}\mPhi_q\tp,
\end{equation*}
where $\beta$ is a scalar (see below), $\mPhi_q = [\vphi_q^{\,(1)}\:\cdots\:\vphi_q^{\,(p)}]$, $\mTheta = \mathrm{diag}(\theta_q^{(1)},\, \ldots,\, \theta_q^{(p)})$, and $(\vphi_q^{\,(j)}, \theta_q^{(j)})$ are Ritz pairs \cite[sec 10.1.4]{golub_van_loan} corresponding to a $p$-dimensional approximate eigenspace of $\widehat{\mC}_q$ found by a randomized symmetric eigendecomposition \cite[Alg 5.3]{halko_finding_structure_with_randomness}. Because $\widehat{\mC}_1,\, \ldots,\, \widehat{\mC}_Q$ differ by scalar shifts of $\mI$, $(\vphi_q^{\,(j)},\, \theta_q^{(j)})$ for $q \geq 2$ can be found by shifting and rescaling $(\vphi_1^{\,(j)},\, \theta_1^{(j)})$.

This preconditioner endows $\mP_q^{-1/2}\widehat{\mC}_q\mP_q^{-1/2}$ with a cluster of $p$ eigenvalues at $\beta$, and all other eigenvalues remain in the interval $W(\widehat{\mC}_q) \defeq [\lambda_\mathrm{min}(\widehat{\mC}_q),\, \lambda_\mathrm{max}(\widehat{\mC}_q)]$ \cite{tshimanga_preconditioning}. To ensure that the eigenvalue cluster at $\beta$ does not degrade the condition number, we use $\beta = \min_i \widehat{\mC}_q(i,i)$, a choice which guarantees that $\beta \in W(\widehat{\mC}_q)$.

\section{Implementation Notes} \label{section:filter_implementation}

\indent

We have done our best to implement the most efficient code possible for each filter considered in this paper. Great care was taken to avoid unnecessary memory allocations that would affect runtime measurements, including by preallocating space for ensemble means and perturbations outside of filter codes. Certain preprocessing and postprocessing operations common to all filters were also performed outside of filter codes so as not to be counted in runtime comparisons. Wherever possible, we avoided operations that explicitly loop over data components and ensemble members, and instead used block operations over vectors of data and matrices of ensemble members. Parallelism was incorporated implicitly through the multithreading already present in BLAS libraries for matrix and vector operations.

Our implementations of the serial ESRF \cite{shlyaeva_serial_filter} and Krylov-based GETKF \cite{steward_matrix_functions} differ slightly from the algorithms originally described by those authors. The differences are as follows.
\begin{enumerate}
\item Our implementation of the Krylov-based GETKF uses preconditioned conjugate gradients (PCG) iteration to update the mean, and Lanczos iteration with reorthogonalization to update the perturbations. For the mean update, we use the preconditioner described in \cref{section:preconditioning}. The original description of this algorithm \cite{steward_matrix_functions} uses a different Krylov method, namely a restarted Arnoldi iteration, for these steps. We have used a different Krylov method for the sake of convenience, and we have no reason to believe that our choices would degrade this filter in terms of accuracy or cost.

\item For computing prior observable variances in the serial ESRF \cite{shlyaeva_serial_filter}, we use a formula which is different from (though mathematically equivalent to) the one given in Shlyaeva et al.'s original paper. This was done for the sake of efficiency, in order to a loop over state components.
\end{enumerate}
Our filter code is available at \url{https://github.com/robin-armstrong/info-esrf-experiments}.
\end{document}